\newif\ifdraft\drafttrue 
\newif\ifappendix\appendixtrue
\newif\ifanonymous\anonymousfalse
\keywords{virtual machines, first-class continuations, formal language semantics}
\begin{abstract}

\term{WebAssembly} is designed to be an alternative to JavaScript that is a
safe, portable, and efficient compilation target for a variety of languages.
The performance of high-level languages depends not only on the underlying performance 
of WebAssembly, but also on the quality of the 
generated WebAssembly code. In this paper, we identify several features of high-level
languages that current approaches can only compile to WebAssembly by generating complex and
inefficient code. We argue that these problems could be addressed if
WebAssembly natively supported first-class continuations.
We then present \name, which extends WebAssembly with delimited
continuations. \name introduces no new value types, and thus does not require significant changes
to the WebAssembly type system (validation). \name is safe,
even in the presence of
foreign function calls (e.g., to and from JavaScript). Finally, \name is amenable
to efficient implementation: we implement \name as a local change to Wasmtime, an
existing WebAssembly JIT.
We evaluate \name by implementing \cname, which adds delimited continuations to C/C++. \cname
uses Emscripten and its implementation serves as a case study on how to
use \name in a compiler that targets WebAssembly. We present 
several case studies using \cname, and show that on implementing green threads,
it can outperform the state-of-the-art approach Asyncify with an $18\%$
improvement in performance and a $30\%$ improvement in code size.
\end{abstract}
\begin{document}

\begin{CCSXML}
<ccs2012>
   <concept>
       <concept_id>10011007.10011006.10011008.10011024.10011037</concept_id>
       <concept_desc>Software and its engineering~Coroutines</concept_desc>
       <concept_significance>500</concept_significance>
       </concept>
 </ccs2012>
\end{CCSXML}

\ccsdesc[500]{Software and its engineering~Coroutines}

\title{\name: Delimited Continuations for WebAssembly}

\ifanonymous
\author{Anonymous Author(s)}
\else
\author{Donald Pinckney}
\email{pinckney.d@northeastern.edu}
\affiliation{\institution{Northeastern University}\country{USA}}
\author{Arjun Guha}
\email{a.guha@northeastern.edu}
\affiliation{\institution{Northeastern University}\country{USA}}
\author{Yuriy Brun}
\email{brun@cs.umass.edu}
\orcid{0000-0003-3027-7986}
\affiliation{\institution{University of Massachusetts Amherst}\country{USA}}
\fi
\maketitle

\section{Introduction}
\label{sec:intro}

For decades, ECMAScript (JavaScript) was the only programming language that was
universally supported by all major web browsers.
There are now several, high-performance JavaScript implementations that make
it possible to run large programs, such as spreadsheets, IDEs, and video editors
on the Web.
In fact, many contemporary desktop applications, such as Slack and
Visual Studio Code, are now built with JavaScript and other web technologies~\cite{wasm:electron}.

Since web browsers, and thus JavaScript, are ubiquitous, there are now
scores of programming languages with compilers that emit JavaScript to
run on the Web. However, compiling to JavaScript has two serious drawbacks:
1)~programs may perform poorly when compiled to JavaScript, and 
2)~a variety of language features, such as threads, are hard to compile to
JavaScript.

However, there is now an alternative to JavaScript.
\term{WebAssembly}~\cite{haas:2017:bringing} is a 
recently introduced low-level language
that aims to be a better compiler target language than JavaScript. All modern
web browsers support WebAssembly, and despite its name, there are several
WebAssembly runtime systems that are not embedded in browsers.
When programs written in C/C++ are compiled to WebAssembly, they run $1.3\times$
faster on average than when they are compiled to JavaScript~\cite{haas:2017:bringing, jangda:2019:not-so-fast}.
However, given WebAssembly as it exists today, it remains difficult to compile
a variety of language features, including green threads, coroutines, and continuations.
In fact, many languages that support these features natively, either do not
support them in WebAssembly, or produce slow code.
For example, the Go compiler has a WebAssembly backend. However,
it struggles to support green threads (\emph{Goroutines}), which makes the compiler
difficult to maintain, and produces code that performs 
poorly~\cite{wasm:go-issue-1, wasm:go-issue-2, wasm:go-issue-3, wasm:go-issue-4}.

\emph{Safety} is a key design goal of WebAssembly, which is necessary for
web browsers to run untrusted code in a trustworthy manner.
Toward this end, WebAssembly programs are isolated from the browser, and
cannot directly alter the low-level state of the WebAssembly runtime.
In particular, the WebAssembly stack is not stored on the WebAssembly heap.
Moreover, WebAssembly only supports structured control-flow and does not
support 
exceptions\footnote{There exists a formal proposal to extend WebAssembly
with exceptions~\cite{wasm:exceptions}.}$\!\!$, \emph{goto}, and
\emph{longjmp}.
These restrictions make WebAssembly validation straightforward and fast.
However, they make it difficult to implement non-local control flow.
For example, Goroutines and green threads require low-level support
for switching between stacks, which WebAssembly does not directly support.

\paragraph{State of the art.}

To workaround the restrictions of WebAssembly, the 
 Go compiler performs a global program transformation, which 
1)~builds a copy of the WebAssembly stack in the heap to store local variables,
and
2)~simulates non-local jumps via an elaborate state machine in each function.
Asyncify~\cite{wasm:asyncify} uses a similar approach to add virtual
instructions that save and restore stacks to WebAssembly.
 Prior benchmarks indicate that
Asyncify has a performance overhead
of 20\%--100\%, and a similar increase in code size~\cite{wasm:asyncify}.
Moreover, since these tools use a \emph{global transformation} to achieve non-local control flow,
programmers are forced to pay a steep cost for such features, even when their code
uses them minimally.

\paragraph{Our contributions.}

In this paper, we present \name, an extension to WebAssembly that adds support
for \emph{first-class, delimited continuations},
which are sufficient to implement a wide variety of language features,
including green threads, coroutines, and exceptions. Our extension has only
a handful of new instructions, is designed to support efficient implementation,
and is designed to work well when WebAssembly programs interact with other
languages (e.g., JavaScript).

Although first-class continuations are a well-known abstraction, they are
typically found in higher-level programming languages
(e.g., Scheme and Racket). These
languages are compiled to low-level native code that does not support
first-class continuations. Our work inverts this tradition and instead
adds first-class continuations directly to a low-level language.
In doing so, our design tackles unique challenges imposed by the low-level setting,
such as the lack of first-class functions, lack of garbage collection, 
and the requirement that WebAssembly support safe interoperability with host
languages (e.g., JavaScript).

Another goal of our work is to ensure that our new instructions align with
with WebAssembly's performance, portability, and safety objectives. 
We considered the following design goals: 
1)~Common language features, such as green threads, 
should be able to compile to efficient \name code. 
2)~The extension should lend itself to simple
type checking (validation).
3)~The extension should lend itself to high-performance implementation in
existing WebAssembly JITs.
4)~The extension must be safe.
5)~Existing WebAssembly instructions and code should suffer no performance penalty. And
6)~the performance of new instructions should be fast and predictable.

Since the goal of \name is to provide a better 
compiler target language, we also prototyped \cname, 
an extension to C/C++ that adds support for
delimited continuations. \cname uses the Emscripten compiler that
compiles from C/C++ to \name, and we use it to implement programs with a variety of
features, including green threads. We then evaluate the performance
of green threads when implemented with \cname against the state-of-the-art
approach Asyncify~\cite{wasm:asyncify}.

To summarize, we make the following contributions: 

\begin{enumerate}

\item We design \name, and present its semantics, validation, and safety
properties.

\item We implement \name as a modest extension to Wasmtime, which is a 
real-world WebAssembly JIT.

\item Using \name and the Emscripten compiler from C/C++ to WebAssembly, we
present \cname, which adds delimited continuations to C/C++.

\item We evaluate the performance of our \name implementation by comparing
it to a third-party tool that implements continuations by source-to-source
transformation.

\end{enumerate}

\definecolor{floatColor}{rgb}{1.0, 0.9, 0.85} 
\definecolor{barColor}{rgb}  {0.85, 0.9, 1.0} 

\begin{figure*}[ht]
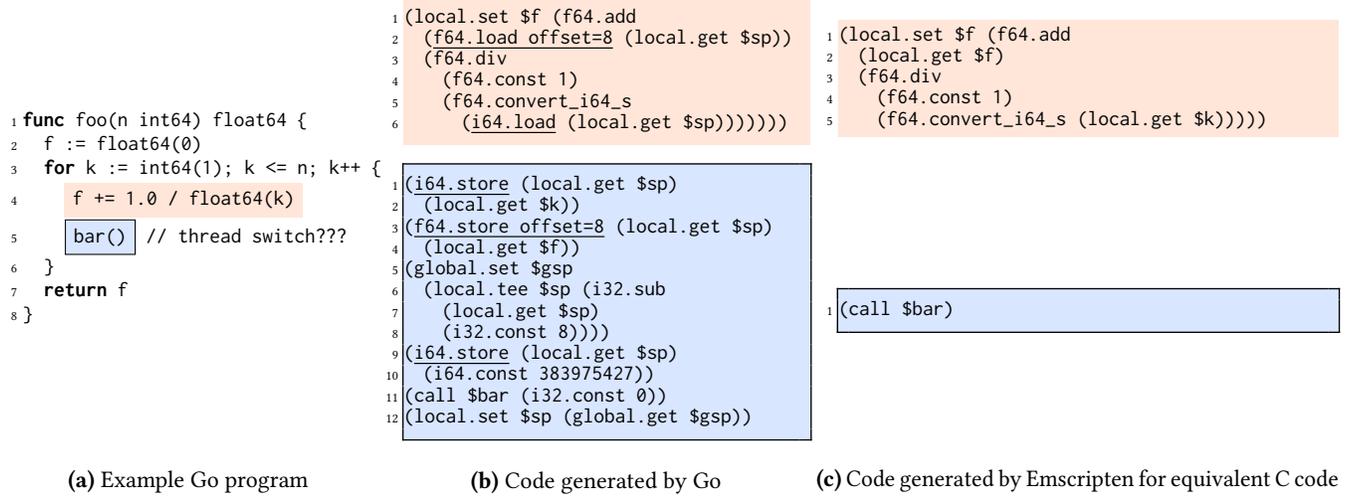

\begin{subfigure}{.28\textwidth}
\vspace{4em}
\lstset{language=MyGo}
\begin{lstlisting}
func foo(n int64) float64 {
  f := float64(0)
  for k := int64(1); k <= n; k++ {
    |\colorbox{floatColor}{f += 1.0 / float64(k)}|
    |\fcolorbox{black}{barColor}{bar()}| // thread switch???
  }
  return f
}
\end{lstlisting}
\vspace{4.3em}
\caption{Example Go program}
\label{fig:case-go-src}
\end{subfigure}
\begin{subfigure}{.32\textwidth}
\lstset{language=Wasm,frame=tb,framesep=4pt,framerule=0pt,backgroundcolor=\color{floatColor}}
\begin{lstlisting}
(local.set $f (f64.add 
  (|\underline{f64.load offset=8}| (local.get $sp)) 
  (f64.div 
    (f64.const 1) 
    (f64.convert_i64_s 
      (|\underline{i64.load}| (local.get $sp)))))))
\end{lstlisting}

\lstset{language=Wasm,frame=tlbr,framesep=4pt,framerule=0.5pt,backgroundcolor=\color{barColor},framexleftmargin=-4pt,framexrightmargin=-4pt}
\begin{lstlisting}
(|\underline{i64.store}| (local.get $sp) 
  (local.get $k))
(|\underline{f64.store offset=8}| (local.get $sp) 
  (local.get $f))
(global.set $gsp 
  (local.tee $sp (i32.sub 
    (local.get $sp) 
    (i32.const 8))))
(|\underline{i64.store}| (local.get $sp) 
  (i64.const 383975427))
(call $bar (i32.const 0))
(local.set $sp (global.get $gsp))
\end{lstlisting}
\caption{Code generated by Go}
\label{fig:case-go-bin}
\end{subfigure}
\begin{subfigure}{.39\textwidth}
\vspace{1em}
\lstset{language=Wasm,frame=tb,framesep=2pt,framerule=0pt,backgroundcolor=\color{floatColor}}
\begin{lstlisting}[escapechar=\%]
(local.set $f (f64.add
  (local.get $f)
  (f64.div
    (f64.const 1)
    (f64.convert_i64_s (local.get $k)))))
\end{lstlisting}
\vspace{5em}
\lstset{language=Wasm,frame=tlbr,framesep=4pt,framerule=0.5pt,backgroundcolor=\color{barColor},framexleftmargin=-4pt,framexrightmargin=-4pt}
\begin{lstlisting}
(call $bar)
\end{lstlisting}
\vspace{4em}
\caption{Code generated by Emscripten for equivalent C code}
\label{fig:case-emcc-bin}
\end{subfigure}
\caption{Current WebAssembly code generation by Go and Emscripten.} 
\label{fig:case-study-gen}
\end{figure*}

\section{The \name Approach}
\label{sec:tour}


We start by illustrating the basics of WebAssembly and sketch the compilation
strategy that the Go compiler uses to support Goroutines in WebAssembly
(\cref{sec:WebAssembly}). We then present \name, which extends WebAssembly
with first-class continuations (\cref{sec:approach overview}), and \cname,
which extends C/C++ with first-class continuations (\cref{sec:c-k}). We use
\cname to present green threads, generators, and probabilistic programming in
WebAssembly (\cref{sec:using-wasmk}).

%

\subsection{WebAssembly}
\label{sec:WebAssembly}

WebAssembly is a stack machine with a conceptually independent control
stack and value stack. For example, the
$\op[2]{i64.const}$ instruction pushes the integer $2$ onto the value stack,
and the $\op{i64.mul}$ instruction pops two integers off the stack, and pushes
their product onto the stack. Similarly, functions receive their arguments
and return their result on the value stack too. Each function has a collection
of local variables, and can use $\op[\$x]{local.get}$ to push a local variable's
or an argument's value onto the value stack, and $\op[\$x]{local.set}$ to pop a value
off the stack and update the variable. Similarly to local variables in C, registers
will be allocated for local variables during JIT compilation.
For example, the \texttt{\$quadruple} function (\cref{ex-quadruple-call})
quadruples its argument by calling \texttt{\$helper} to first double the
argument, and then doubles the result produced by \texttt{\$helper}.

In addition to storing data on the stack and in local variables,
data can also be stored in \term{linear memory} (WebAssembly's heap), 
which is a byte-addressable region of memory. Linear memory can be read
and written via $\op{i32.load}$ and $\op{i32.store}$ respectively, 
for, e.g., the $\mathbf{i32}$ type. Unlike local variables, using linear memory
will always incur hardware loads and stores.

\paragraph{Compiling Go to WebAssembly.}

\emph{Goroutines}, which are similar to green threads (or, user-space threads),
are the primary concurrency abstraction of the Go programming language.
When compiled to native code, the Go runtime manages a
pool of physical threads and uses them to run several Goroutines on a single thread 
(a so-called $M$:$N$ threading model). This
involves using low-level instructions to save and restore the stack and
registers' values, during a (user-space) context-switch from one Goroutine
to another. A context-switch may occur for a number of reasons, e.g., when
the active Goroutine is blocked on I/O or simply periodically.

While physical threads are discussed in the threading proposal~\cite{wasm:threads}
and are available in some WebAssembly runtimes (e.g. Chrome),
switching between goroutines within a single physical thread in WebAssembly 
is far more difficult compared to native code, since
the WebAssembly stack is not allocated in linear memory and thus
cannot be saved or restored.
\footnote{This design ensures that a malicious program cannot alter
return addresses to escape the WebAssembly sandbox.}
Instead, the compiler generates code that maintains a
heap-allocated copy of the stack residing in linear memory.

\cref{fig:case-go-src} shows an example of a simple function \texttt{foo} in Go.
The function computes a partial sum of a series, and calls
the function \texttt{bar} on each iteration. Since \texttt{bar} may trigger
a thread switch (Go inserts a yield at the start of every function), 
the compiler has to generate code to save and restore \texttt{foo}'s stack.

We sketch the generated code in \cref{fig:case-go-bin}.
Before the call to \texttt{bar}, the generated code saves the local variables
(\texttt{f} and \texttt{k}) onto the copy of the stack in linear memory 
(bottom of \cref{fig:case-go-bin}).
Thus if \texttt{bar} switches to a new Goroutine,
the local variables of \texttt{foo} can be safely discarded.
Conversely, \texttt{foo} reads the values of its local variables from the
heap-allocated stack (top of \cref{fig:case-go-bin}). Note that no such
load or store instructions need to be emitted when compiling Go to native code
directly, because in native code the Go runtime can freely manipulate the
machine stack.

It is instructive to consider how code generation works for simpler
languages, such as C. Given the C equivalent of \texttt{foo},
the Emscripten compiler from C to WebAssembly generates much simpler
code (\cref{fig:case-emcc-bin}), without any loads and stores to linear memory.
Since the code uses WebAssembly local variables exclusively, a WebAssembly
JIT can easily allocate them to machine registers. (Emscripten
does not support \texttt{setjmp} and \texttt{longjmp}, which can be used to
build green threads in C.)

The additional loads and stores in Go have a cost. In a call to \texttt{foo(}$2^{30}$\texttt{)}, 
the \texttt{perf} tool shows the Go program executes
$2.5\times$ more instructions, $3.0\times$ more branches, 
$1.9\times$ more loads, and $1.5\times$ more stores than the equivalent
C program, when we compile both to WebAssembly (compiled with optimizations and run with \texttt{node v14.4.0}).
Overall, the Go program takes $1.8\times$  longer than the C
program. However, when compiled to native code, the performance of the C and Go code is 
nearly identical.

\subsection{\name}
\label{sec:approach overview}

\begin{figure}
	\centering
	\begin{subfigure}{0.48\textwidth}
	\lstset{language=Wasm}
	\begin{lstlisting}
(func $helper (param $x i64) (result i64)
    local.get $x
    i64.const 2
    i64.mul)
(func $quadruple (param $x i64) (result i64)
    local.get $x
    call $helper
    i64.const 2
    i64.mul)
	\end{lstlisting}
	\caption{This code doubles its input first by calling a helper function, and then doubling again.}
	\label{ex-quadruple-call}
	\end{subfigure}
	\begin{subfigure}{0.48\textwidth}
	\lstset{language=Wasm}
	\begin{lstlisting}
(func $handler (param $k i64) (param $x i64)
    local.get $k
    local.get $x
    i64.const 2
    i64.mul
    restore)
(func $quadruple2 (param $x i64) (result i64)
    local.get $x
    control $handler
    i64.const 2
    i64.mul)
	\end{lstlisting}
    \caption{In contrast, this code captures the current stack, and then jumps back to that stack.}
	\label{ex-quadruple-control}
	\end{subfigure}
\caption{Two ways to write a function which quadruples its input.}
\label{ex-quadruple}
\end{figure}

\name adds five new instructions to WebAssembly. 
1)~The $\control{h}$ instruction captures the current continuation, stores
it a region of memory called the \term{continuation table},
assigns it a new \emph{continuation ID} ($\kappa$),
and invokes the function ($h$) with a fresh stack,
passing the continuation ID and a user-provided argument.
2)~$\restore$ receives a continuation ID as its argument, and restores
the associated continuation, discarding the current continuation in the process.
It is a runtime
error to $\restore$ the same continuation multiple times. 
3)~The $\continuationcopy$ instruction creates a copy of a continuation.
4)~The $\continuationdelete$ instruction deletes a continuation without
restoring it. 5)~The $\prompt{\!}{e^*}$ instruction wraps a block of
instructions ($e^*$), and serves as a delimiter for continuation capture:
all continuations captured by $e^*$ do not extend beyond the call to prompt.

\cref{ex-quadruple-control} shows an alternate implementation of \texttt{\$quadruple},
using continuations in a trivial way. The function captures
its continuation and passes it to \texttt{\$handler} (line 9), which runs in the
empty continuation. The \texttt{\$handler} function receives two arguments: \texttt{\$k} is 
the captured continuation ID, and \texttt{\$x} is the 
original argument to \texttt{\$quadruple2}, passed through to \texttt{\$handler}.
The \texttt{\$handler} function doubles its argument and restores \texttt{\$k},
passing the doubled value along. At this point, the captured continuation will execute
(line 10), with the doubled valued pushed onto the stack. 
Execution completes as before by doubling again.

\begin{figure}[t]
\lstset{language=MyC}
\begin{lstlisting}
typedef uint64_t k_id;
typedef void (*control_handler_fn)(k_id, uint64_t);

uint64_t control(uint64_t arg, control_handler_fn fn_ptr);
void restore(k_id k, uint64_t val);
uint64_t continuation_copy(k_id k);
void continuation_delete(k_id k);
#define prompt(x) <...>
\end{lstlisting}
\caption{A C/C++ First-Class Continuations Header}
\label{c-function-proto}
\end{figure}
    
\subsection{\cname: Continuations for C/C++}
\label{sec:c-k}

Writing and reading substantial examples in WebAssembly is tedious. Therefore,
the rest of this paper presents examples using C/C++. We use the Emscripten compiler
from C/C++ to WebAssembly, and export the new \name instructions to C/C++ programs
using the API defined in \cref{c-function-proto}. Each of these functions
call their corresponding instructions in \name to manipulate the WebAssembly stack.

However, it is not enough to directly expose the \name primitives to C++ code.
A program written in C/C++ can get the memory address of a local variable,
which WebAssembly does not support.
Emscripten uses a heap-allocated portion of the stack to support these programs.
Therefore, \cname has to carefully manage this portion of the stack as well (\cref{sec:in-practice}).

Adding first-class continuations in this manner to C/C++ is unusual, 
as typically first-class continuations are a feature in
high-level languages such as Racket, and need to be compiled to low-level code
which does not support first-class continuations. By going in the 
opposite direction, we get them almost for free in a higher-level language.

\subsection{Using Continuations in \name and \cname}
\label{sec:using-wasmk}

We now present several applications of \name, using \cname to write our code.

\paragraph{Green threads.}

\begin{figure}
\begin{subfigure}{\columnwidth}
\lstset{language=MyC}
\begin{lstlisting}
void thread_main() {
    std::cout << "A" << std::endl;
    thread_yield();
    std::cout << "B" << std::endl;
}
int main() {
    thread_create(thread_main);
    thread_create(thread_main);
    join_all_threads();
}
\end{lstlisting}
\caption{Example of use. Prints \texttt{AABB}}
\label{ex-user-threads-usage}
\end{subfigure}

\begin{subfigure}{\columnwidth}
\lstset{language=MyC}
\begin{lstlisting}
std::vector<uint64_t> Q;
uint64_t after_join;
uint64_t dequeue() {
    uint64_t next_k = Q.back(); Q.pop_back(); 
    return next_k;
}

void save_fk_restore(uint64_t fk, uint64_t create_k) {
    restore(create_k, fk);
}
void create_handler(uint64_t k, uint64_t f) {
    control(save_fk_restore, k);
    ((void (*)())f)();
    if(Q.size() > 0) {
        restore(dequeue(), 0);
    } else {
        restore(after_join, 0);
    }
}
void thread_create(void (*f)()) {
    Q.insert(Q.begin(), control(create_handler, (uint64_t)f));
}

void join_handler(uint64_t k, uint64_t arg) {
    after_join = k;
    restore(dequeue(), 0);
}
void join_all_threads() {
    control(join_handler, 0);
}

void yield_handler(uint64_t k, uint64_t arg) { 
    Q.insert(Q.begin(), k); |\label{line:q_insert}|
    restore(dequeue(), 0); |\label{line:dq_restore}|
}
void thread_yield() {
    control(yield_handler, 0); |\label{line:yield_control}|
}
\end{lstlisting}
\caption{Implementation.}
\label{ex-user-threads-impl}
\end{subfigure}
\caption{Green threads in \cname}
\label{ex-wasm-user-threads}
\end{figure}

Green threads (or cooperative threads), are a simple example of an abstraction
that is easy to build with continuations. \Cref{ex-user-threads-impl} shows
an implementation of green threads in \cname, which provides functions to
create new threads, wait on threads to complete, and suspend the running
thread and yield control to another thread (\texttt{thread\_yield}).
\Cref{ex-user-threads-usage} is a small program that uses this threading 
library.

The key insight is that \texttt{thread\_yield} can be accomplished by 
capturing the current continuation of the thread via \texttt{control} (\cref{ex-user-threads-impl} \cref{line:yield_control}), 
storing the continuation ID in a queue (\cref{ex-user-threads-impl} \cref{line:q_insert}), 
and dequeue-ing and restoring another continuation ID (\cref{ex-user-threads-impl} \cref{line:dq_restore}).
Since green threads do not need to pass data between threads, we do not utilize 
the data arguments to \texttt{control} and \texttt{restore}.

\begin{figure}
\begin{subfigure}{\columnwidth}
\lstset{language=MyC}
\begin{lstlisting}
void example_generator(Generator *g) {
    uint64_t i = 0;
    while(1) { gen_yield(i++, g); }
}
int main() {
    Generator *g = make_generator(example_generator);
    for(int i = 0; i < 10; i++) 
        printf("%llu\n", gen_next(g));
    free_generator(g);
    return 0;
}
\end{lstlisting}
\caption{Example of use.}
\label{generators-example}
\end{subfigure}

\begin{subfigure}{\columnwidth}
\lstset{language=MyC}
\begin{lstlisting}
typedef struct { 
    k_id after_next, after_yield; uint64_t value; 
} Generator; |\label{line:generator_struct}|
// Helpers for converting a function to a continuation |\label{line:generator_help_start}|
void return_convert_result(uint64_t k, uint64_t ak) { 
    restore(ak, k); 
}
void convert_handler(uint64_t k, void (*f)(Generator*)) {    
    f((Generator *)control(return_convert_result, k));
}
uint64_t convertFuncToCont(void (*f)(Generator*)) { 
    return control(convert_handler, f); 
} |\label{line:generator_help_end}|
// Allocating a generator |\label{line:generator_make_start}|
Generator *make_generator(void (*f)(Generator*)) { 
    Generator *g = (Generator *)malloc(sizeof(Generator));
    g->after_yield = convertFuncToCont(f); return g; |\label{line:generator_help_use}|
} |\label{line:generator_make_end}|
// Yielding implementation |\label{line:generator_yield_start}|
void yield_handler(k_id k, Generator *g) {
    g->after_yield = k;
    restore(g->after_next, g->value); |\label{line:generator_restore_after_next}|
}
void gen_yield(uint64_t v, Generator *g) {
    g->value = v;
    control(yield_handler, g);
}
// Next implementation
void next_handler(k_id k, Generator *g) {
    g->after_next = k;
    restore(g->after_yield, 0);
}
uint64_t gen_next(Generator *g) {
    return control(next_handler, g);
} |\label{line:generator_next_end}|
// Freeing a generator |\label{line:generator_free_start}|
void free_generator(Generator *g) {
    continuation_delete(g->after_yield); free(g); |\label{line:generator_delete_use}|
} |\label{line:generator_free_end}|
\end{lstlisting}
\caption{Implementation.}
\label{generators-impl}
\end{subfigure}
\caption{Generators in \cname.}
\end{figure}

\paragraph{Generators.}

\emph{Generators} are a programming abstraction that are found in
a variety of languages, including Python and JavaScript. Although C does not
support generators, we can build them using
\texttt{control} and \texttt{restore}.
\Cref{generators-example} shows a program in \cname that prints the
numbers $0$ through $9$, using a generator function. The generator contains
what appears to be an infinite loop, but each iteration suspends execution
in the generator (\texttt{gen\_yield}) and resumes execution in
\texttt{main}.

\Cref{generators-impl} presents the implementation of generators using \cname.
The primary difference between our
implementation and canonical implementations (e.g., in Racket~\cite{wasm:racket-generator}), is
that C does not support first-class functions. Therefore,
we represent a generator as an object (\texttt{struct}) with fields that hold
1)~the ID of the continuation where the generator was invoked 
(\texttt{after\_next}), 
2)~the ID of the continuation where the generator was
last suspended (\texttt{after\_yield}), and 
3)~the next 
value to return from the generator (\texttt{value}).

Finally, the generator API includes a function to delete a generator
object (\texttt{free\_generator}).
This function deletes the continuation
within the generator (\texttt{g->after\_yield}) using \texttt{continuation\_delete} 
(\cref{line:generator_delete_use}).
Note that since the other continuation (\texttt{g->after\_next}) was restored to during
the most recent yield (\cref{line:generator_restore_after_next}), it is currently 
unallocated and does not need to be deleted.
The need for a $\continuationdelete$ instruction is subtle, but is required for
natural use cases of first-class continuations in a low-level language without
garbage collection.

\paragraph{Probabilistic programming.}

\begin{figure}
\begin{subfigure}{\columnwidth}
\lstset{language=MyC}
\begin{lstlisting}
uint64_t sum_d6() {
    auto *d6 = new std::vector<uint64_t> {1, 2, 3, 4, 5, 6};
    return uniform(d6) + uniform(d6);
}
int main() {
    std::cout << *driver(sum_d6) << std::endl; return 0;
}
\end{lstlisting}
\caption{Example of use.}
\label{ex-prob-use}
\end{subfigure}
\begin{subfigure}{\columnwidth}
\lstset{language=MyC}
\begin{lstlisting}
struct ContinuationThunk { |\label{line:thunk_start}|
    k_id continuation; // The continuation to resume
    uint64_t value; // The value to pass to the continuation
};
// vector of thunks which need to be executed
std::vector<ContinuationThunk *> to_execute; |\label{line:thunk_end}|

std::map<uint64_t, double> *driver(uint64_t (*body)()) { |\label{line:driver_start}|
    auto *results = new std::vector<uint64_t>();
    results->push_back(body()); |\label{line:prob_push_back}|
    if(rest.size() > 0) {
        ContinuationThunk *t = rest.back(); rest.pop_back();
        restore(t->continuation, t->value); |\label{line:prob_restore}|
    }
    return count_probs(results);
} |\label{line:driver_end}|

void uniform_handler(k_id k, std::vector<uint64_t> *args) {    
    for(auto it = std::next(args->begin()); 
        it != args->end(); ++it) {
        to_execute.push_back(new ContinuationThunk {
            .continuation=continuation_copy(k), |\label{line:uniform_copy}|
            .value=*it});
    }
    restore(k, args[0]); |\label{line:uniform_restore}|
}
uint64_t uniform(std::vector<uint64_t> *args) {
    return control(uniform_handler, args); |\label{line:uniform_control}|
}
\end{lstlisting}
\caption{Implementation.}
\label{ex-prob-impl}
\end{subfigure}
\caption{An embedded probabilistic programming language in C++.}
\end{figure}

A more involved example is the implementation of 
an embedded probabilistic programming language in C++. 
Probabilistic programming languages allow probabilistic models
to be implemented declaratively in general purpose languages.
One common approach to implement a probabilistic programming language
is to relate sampling from a probability 
distribution to sampling from a distribution of program executions~\cite{paige:2014:compilation}.
Performing this sampling requires some use of control 
operators which can essentially fork execution
to allow it to be re-executed (i.e., sampled from) multiple times. While the
implementation of a proper probabilistic programming language with modern sampling
algorithms is out of the scope of this paper, we can nevertheless demonstrate how to
implement a probabilistic programming language allowing for finite distributions 
embedded in C++.

An example usage of the embedded probabilistic programming language is shown 
in \cref{ex-prob-use}. 
\texttt{sum\_d6} computes the sum of two independent dice rolls.
The call to \texttt{driver(sum\_d6)} will run the sampling algorithm, 
eventually returning a map that represents
the probability mass function (PMF) of \texttt{sum\_d6}. 
The proposed API consists of just a \texttt{uniform} 
function which represents the uniform distribution over a discrete set of values 
(the vector argument) and the \texttt{driver} function which conducts the sampling 
to obtain the final PMF. This API can be easily expanded in this framework to allow for
different distributions and conditioning, but these are omitted for brevity.

The implementation of the API is shown in \cref{ex-prob-impl}.
The core idea is that each sample from a distribution will correspond to forking 
the execution for each sampled value. For example, if sampling from 
\texttt{uniform(1, 2, 3)} the execution would be forked into 3 executions, 
one with each sampled value. The various execution forks are stored in the 
\texttt{to\_execute} state, in the format of a vector of 
\texttt{Continuation\-Thunk}s (lines \ref{line:thunk_start}--\ref{line:thunk_end}), which keep track of the continuation to 
restore to, and the sampled value to pass to the continuation upon restoring. 

The implementation of \texttt{driver} (lines \ref{line:driver_start}--\ref{line:driver_end}) keeps a vector of final sampled values, 
and proceeds by first running the given function argument (\texttt{body()}) and saving 
the result, and then dequeuing a thunk to execute and restoring it. 
Supposing that \texttt{body} forked its execution into thunks, then the call to
\texttt{restore} (\cref{line:prob_restore}) will jump back into the execution of somewhere in \texttt{body},
eventually returning yet again to the \texttt{push\_back} (\cref{line:prob_push_back}).
Thus, \texttt{driver} will continue to push results and dequeue a new thunk, until all
thunks (samples) are exhausted. Finally, \texttt{count\_probs} computes the
desired map.

With \texttt{driver} worked out, the implementation of \texttt{uniform} is 
conceptually straightforward: \texttt{uniform(args)} should fork the execution for 
each value in \texttt{args}. This is accomplished by first immediately 
calling \texttt{control} (\cref{line:uniform_control}) to capture the current continuation. Then, for every
element except the first element of \texttt{args} a new thunk is queued, where the
continuation is a \emph{copy} of the current continuation \texttt{k} (\cref{line:uniform_copy}). 
An explicit copy of \texttt{k} is required because all of these thunks will eventually
be restored to, and under one-shot continuation semantics it is invalid to restore
to a single continuation (\texttt{k}) multiple times. Finally, the current continuation
is restored immediately with the first sampled value rather than saved in a thunk
(\cref{line:uniform_restore}).

\section{Semantics of \name}
\label{sec:approach}

This section presents 1)~an overview of WebAssembly's operational semantics,
2)~extends the operational semantics to support continuations, 3)~presents
type-checking (known as validation) for this extension, and 4)~proves that the
extension is sound.

\begin{figure*}
\footnotesize
\begin{subfigure}{0.4\textwidth}
\[
\begin{array}{@{}ll}
\op[2]{i64.const} & \wasmcomment{push 2} \\
\op{block} & \wasmcomment{enter block} \\
\op[3]{i64.const} & \wasmcomment{push 3} \\
\op[4]{i64.const} & \wasmcomment{push 4} \\
\op{i64.add} & \wasmcomment{pop 3 \& 4, push 7} \\
\op[0]{br} & \wasmcomment{exit block}\\
\op{end} & \wasmcomment{end of block} \\
\op{i64.sub} & \wasmcomment{pop 2 \& 7, push $-5$}
\end{array}
\]
\caption{Example program.}
\label{example-wasm-stack}
\end{subfigure}
\begin{subfigure}{0.59\textwidth}
\[
\begin{array}{@{}l@{\;}l}
  & \op[2]{i64.const} \; \op{block} \; \op[3]{i64.const} \; \op[4]{i64.const} \; \op{i64.add} \; \op[0]{br} \; \op{end} \; \op{i64.sub} \\
\hookrightarrow  & \op[2]{i64.const} \; \mathbf{label}\{\epsilon\} \; \underbrace{\op[3]{i64.const} \; \op[4]{i64.const} \; \op{i64.add} \; \op[0]{br}}_\textrm{within an $L^1$ context} \; \op{end} \; \op{i64.sub} \vspace{0.5em} \\
\hookrightarrow  & \op[2]{i64.const} \; \mathbf{label}\{\epsilon\} \; \op[7]{i64.const} \; \op[0]{br} \; \op{end} \; \op{i64.sub} \\
\hookrightarrow	& \op[2]{i64.const} \; \op[7]{i64.const} \; \op{i64.sub} \\
\hookrightarrow & \op[(-5)]{i64.const}
\end{array}
\]
\caption{Reduction sequence.}
\label{example-wasm-red-seq}
\end{subfigure}

\caption{An example of WebAssembly execution.}
\end{figure*}

\subsection{WebAssembly Semantics}

WebAssembly is formalized as a stack-based, small-step reduction semantics.
This section introduces a small fragment of the WebAssembly semantics,
using the example program in \cref{example-wasm-stack}.
For a more detailed account, we refer the reader to Haas et al.~\cite{haas:2017:bringing}
and the WebAssembly specification~\cite{wasm:spec}.

The WebAssembly stack machine contains both instructions ($e$) that are pending
evaluation and values ($v$) that were produced by instructions that have
already been evaluated. The values are a subset of instructions.
For example, the instruction $\op[n]{i64.const}$ pushes the 64-bit value $n$ 
onto the stack. Moreover, the semantics represents the 64-bit value $n$ as
$\op[n]{i64.const}$.  In the absence of control flow and function calls, 
a configuration of the WebAssembly stack machine has a sequence of
evaluated values ($v^*$) and a sequence of instructions ($e^*$)
in succession ($v^* e^*$), and we
always evaluate the first non-value instruction in the sequence.
The boundary between values and instructions at which evaluation occurs
is called the \term{local context of depth 0} ($L^0[\_]$).

WebAssembly has \term{structured control flow}, and does not have 
goto-style instructions. Instead, the language has structured control
flow blocks (e.g., $\op{block} \ldots \op{end}$ and $\op{loop} \ldots \op{end}$).
The WebAssembly semantics turns all kinds of blocks into labelled blocks
($\mathbf{label}\{e^*\} \ldots \op{end}$), which are an
administrative instruction.\footnote{The
$e^*$ is only needed to encode loops, and can be ignored in this paper.}
The nested structure of labelled blocks is defined by
\term{local contexts of depth $k$} ($L^k[\_]$). 
A local context of depth 0 ($L^0[\_]$) matches a stack of the form $v^*e^*$, 
and a local context of depth $k+1$ matches a local context of depth $k$ nested
inside a labelled block. For example, \cref{example-wasm-stack} has four
instructions within an $L^1$ context.

A WebAssembly program is organized as a collection of \term{modules} that import
and export code and data. An instantiated module with no unresolved imports
is called an \term{instance}. The global execution state of all instances
is called the \term{store}. We present \name as an extension to the WebAssembly
formal semantics, which includes the machinery needed to support multiple
instances. However, for the purpose of this paper, it is sufficient to
consider programs with just one instance.

The WebAssembly stack, nested control flow, the store, and instances are the elements of
WebAssembly that are relevant to \name. 
With these defined, WebAssembly has a  small-step semantics that
updates the stack, and possibly the store ($s$) and local variables ($v_l^*$) at each step
($s;v_l^*;e^* \hookrightarrow s';v_l'^*;e'^*$).
The semantics is congruent with local contexts:
if $s;v_l^*;e^* \hookrightarrow s';v_l'^*;e'^*$ 
then $s;v_l^*;L^k[e^*] \hookrightarrow s';v_l'^*;L^k[e'^*]$. Thus evaluation always occurs
in the innermost local context, unless no such evaluation is possible.
When an instruction does not read or write from the store, we omit the store for
brevity.
\Cref{example-wasm-red-seq} shows
the execution trace of our example.

\subsection{Design and Semantics of \name}
\label{sec:semantics-one-shot}

We first describe the new values and types of \name, then present necessary
changes to WebAssembly instances, and finally present the new reduction
rules of \name.

\begin{figure}
\footnotesize
\[
\begin{array}{r@{\;}c@{\;}ll}
\multicolumn{3}{l}{\textbf{Continuation IDs}} \\
\kid &::=& \mathbf{i64} \\[0.5em]
\multicolumn{3}{l}{\textbf{Instructions}} \\
e &::= &\cdots \\
  & \mid & \mathbf{control} \; h \\
  & \mid & \restore \\
  & \mid & \continuationcopy \\
  & \mid &\continuationdelete \\
  & \mid &\prompt{\textit{tf}}{e^*} \\[0.5em]
\multicolumn{3}{l}{\textbf{Full-Stack Contexts}} \\
\Lmax &::=& v^* [\_] e^* \mid v^* \; \mathbf{label}_n \{ e^* \} \; L^\mathrm{max} \; \op{end} \; e^* \\[0.5em]
\multicolumn{3}{l}{\textbf{Stores}} \\
s &::=& \{ \mathrm{inst} \; \textit{inst}^*,\cdots \} \\[0.5em]
\multicolumn{3}{l}{\textbf{Instances}} \\
\mathit{inst} &::=& \{ \mathrm{func} \; cl^*, \mathrm{glob} \; v^*, \mathrm{tab} \; i^?, \mathrm{mem} \; i^?, \highlight{\mathrm{pstack} \; \mathit{pstack}} \}  \\[0.5em]
\multicolumn{3}{l}{\textbf{Continuation Table Stacks}} \\
\mathit{pstack} & ::= & \mathit{pinst}^* \\[0.5em]
\multicolumn{3}{l}{\textbf{Continuation Tables}} \\
\mathit{pinst} & ::= & \{\mathrm{ctable} \; \left( \{\mathrm{locals} \; v^*, \mathrm{ctx} \; \Lmax, \mathrm{inst} \; i \} \mid \mathit{nil} \right)^*, \mathrm{root} \; (\kid \mid \mathit{nil}) \}
\end{array}
\]
\caption{Syntax of \name: we extend the WebAssembly runtime structure with a table of continuations.}
\label{wasmk-new-types}
\end{figure}
	
\paragraph{The continuation table and continuation IDs.}

In a language that supports first-class continuations, a continuation is a new
kind of value. First-class continuations are typically found in high-level
languages (e.g., Scheme or Racket) that also support first-class functions.
This allows functions that receive captured continuations (e.g., the argument
to \texttt{call/cc} in Scheme) to close over other variables in their environment.
However, this is not possible in WebAssembly, since it lacks first-class functions.
Moreover, it is not straightforward to safely add new kinds of values to
WebAssembly either. (The WebAssembly heap is
untyped and byte-addressable, so a program can make arbitrary changes to the
representation of any value stored on the heap.)
\name adds a \emph{continuation table}, which associates a
continuation with an integer-valued \emph{continuation ID} ($\kid$).
The \name runtime system manages the
table, and \name programs work with continuations indirectly by referring
to their ID. Since continuation IDs are standard integers, programs can use
existing load and store instructions to save continuation IDs
in linear memory. However, the new \name instructions have to dynamically ensure
that they receive valid continuation IDs. Our implementation uses 64-bit
integers to represent continuation IDs.

\name has delimited continuations, which are needed to safely
interoperate with host languages such as JavaScript~(\cref{sec:safety-properties}). 
Thus \name includes a $\op{prompt}$ instruction, and we modify instances to
track a stack of dynamic $\op{prompt}$ scopes. Formally, 
we extend instances as follows
(highlighted in  \cref{wasmk-new-types}).
Each instance ($\mathit{inst}$) contains a stack of dynamically 
nested $\op{prompt}$ contexts. Each prompt context ($\mathit{pinst}$) is a
record containing a 
continuation table and a continuation ID of the \term{root continuation}, 
defined to be the continuation associated with the stack which initialized the WebAssembly 
execution or most recent prompt. The continuation table consists of an array of entries, 
where each entry is either $\mathit{nil}$ or a captured continuation, with
all entries initialized to $\mathit{nil}$.\footnote{To control resource
utilization, WebAssembly implementations can define the maximum size of
various dynamic and static data structures, e.g., the number of stack
frames. Similarly, we impose an implementation-dependent bound on the number
of allocated continuations.}

A continuation saved in a continuation table is
a record that contains the values of local
variables (across all stack frames), and the entire evaluation context at the point of capture.
However, WebAssembly's local context ($L^k$) only capture a stack with $k$
nested blocks. Therefore, we define \term{full-stack contexts} ($\Lmax$) as
evaluation contexts that match a stack with arbitrary 
control block depth, and a continuation stores a full-stack context.

Finally, the root continuation ID in a continuation table is maintained such that
if the root continuation is currently executing, 
$\mathit{root}$ will be set to $\mathit{nil}$, otherwise $\mathit{root}$ will be set to the index $\kappa_R$ of
the root continuation in the continuation table. 

\begin{figure*}
\footnotesize
\boxed{s;\; v^*;\; e^* \rightsquigarrow_i s;\; v^*;\; e^*} \\
\vspace{1em}
[Cong]
\begin{prooftree}
	\hypo{s;\; v^*;\; e^* \hookrightarrow_i s';\; v'^*;\; e'^*}
	\infer1{s;\; v^*;\; L^k[e^*] \hookrightarrow_i s';\; v'^*;\; L^k[e'^*]}	
\end{prooftree}\hspace{2em}
[No-Ctrl]
\begin{prooftree}
	\hypo{s;\; v^*;\; e^* \hookrightarrow_i s';\; v'^*;\; e'^*}
	\infer1{s;\; v^*;\; e^* \rightsquigarrow_i s';\; v'^*;\; e'^*}	
\end{prooftree}\hspace{2em}
\vspace{1em}
\begin{alignat*}{3}
&\text{[Ctrl]} & s;\; v_l^*;\; \Lmax[(\mathbf{i64.const} \; v) \; (\mathbf{control} \; h)] &\rightsquigarrow_i s';\; \epsilon ;\; (\mathbf{i64.const} \; \kappa) \; (\mathbf{i64.const} \; v) \; (\mathbf{call} \; h) \; \mathbf{trap}&\quad &\text{if}~(s',\kappa) = \controltrans(s, i, v_l^*, \Lmax) \\
&\text{[Restore]} &s;\; v_l^*;\; \Lmax[(\mathbf{i64.const} \; \kappa) \; (\mathbf{i64.const} \; v) \; \mathbf{restore}] &\rightsquigarrow_i s';\; {v_l^*}' ;\; {\Lmax}' [(\mathbf{i64.const} \; v)]&&\text{if}~(s', v_l^*, {\Lmax}') = \restoretrans(s, i, \kappa) \\
&\text{[Restore-Err]} &s;\; v_l^*;\; \Lmax[(\mathbf{i64.const} \; \kappa) \; (\mathbf{i64.const} \; v) \; \mathbf{restore}] &\rightsquigarrow_i s;\; v_l^* ;\; \mathbf{trap} &&\quad\text{otherwise} \\
& & \\
&\text{[Copy]} &s;\; (\mathbf{i64.const} \; \kappa) \; \mathbf{continuation\_copy} &\hookrightarrow_i s';\; (\mathbf{i64.const} \; \kappa') &&\text{if}~(s', \kappa') = \copytrans(s, i, \kappa) \\
&\text{[Copy-Err]} &s;\; (\mathbf{i64.const} \; \kappa) \; \mathbf{continuation\_copy} &\hookrightarrow_i s;\; \mathbf{trap} &&\quad\text{otherwise}\\
&\text{[Delete]} &s;\; (\mathbf{i64.const} \; \kappa) \; \continuationdelete &\hookrightarrow_i s';\; \epsilon &&\text{if}~s' = \deletetrans(s, i, \kappa) \\
&\text{[Delete-Err]} &s;\; (\mathbf{i64.const} \; \kappa) \; \continuationdelete &\hookrightarrow_i s;\; \mathbf{trap} &&\quad\text{otherwise} \\
&\text{[Prompt]} &s;\; \prompt{\textit{tf}}{e^*} &\hookrightarrow_i s';\; \op{block} \; \textit{tf} \; e^* \; \op{end} \; \op{prompt\_end}  &&\text{if}~ s' = \prompttrans(s, i) \\
&\text{[Prompt-End]} &s;\; \op{prompt\_end} &\hookrightarrow_i s';\; \epsilon  &&\text{if}~ s' = \promptendtrans(s, i)
\end{alignat*}

\begin{align*}
\controltrans(s, i, v_l^*, \Lmax) &::= \begin{cases}
	(\fsetCont(\fsetRoot(s, i, \kappa), i, \kappa, \{ \mathrm{locals} = v_l^*, \mathrm{ctx} = \Lmax, \text{inst} = i \}), \; \kappa) & \quad\quad\quad\quad\quad\hspace{3.5pt} \text{if}~\fgetRoot(s, i) = \mathit{nil} \\
	(\fsetCont(s, i, \kappa, \{ \mathrm{locals} = v_l^*, \mathrm{ctx} = \Lmax, \text{inst} = i \}), \; \kappa) & \quad\quad\quad\quad\quad\hspace{3.5pt} \text{if}~\fgetRoot(s, i) \neq \mathit{nil}	
\end{cases} \\
&\text{where}~\kappa~\text{is fresh, i.e.,}\fgetCont(s, i, \kappa)=\mathit{nil} \\
\restoretrans(s, i, \kappa) &::= \begin{cases}
	(\fsetRoot(\fsetCont(s, i, \kappa, \mathit{nil}), i, \mathit{nil}), \; \fgetCont(s, i, \kappa)_\text{locals}, \; \fgetCont(s, i, \kappa)_\text{ctx}) & \quad\quad\hspace{0.8pt} \text{if}~\fgetRoot(s, i) = \kappa \\
	(\fsetCont(s, i, \kappa, \mathit{nil}), \; \fgetCont(s, i, \kappa)_\text{locals}, \; \fgetCont(s, i, \kappa)_\text{ctx}) & \quad\quad\hspace{0.8pt} \text{if}~\mathit{nil} \neq \fgetRoot(s, i) \neq \kappa
\end{cases} \\
\copytrans(s, i, \kappa) &::= (\fsetCont(s, i, \kappa', \fgetCont(s, i, \kappa)),\; \kappa') \quad\quad\quad\quad\quad\quad\quad\quad\quad\quad\quad\quad\quad\quad\quad\quad\quad\quad\quad\;\;\;\hspace{1.0pt} \text{if}~\fgetRoot(s, i)\neq \kappa \land \fgetCont(s, i, \kappa) \neq \mathit{nil} \\
&\text{where}~\kappa'~\text{is fresh, i.e.,}\fgetCont(s, i, \kappa')=\mathit{nil} \\
\deletetrans(s, i, \kappa) &::= \fsetCont(s, i, \kappa, \mathit{nil}) \quad\quad\quad\quad\quad\quad\quad\quad\quad\quad\quad\quad\quad\quad\quad\quad\quad\quad\quad\quad\quad\quad\quad\quad\quad\quad\quad\quad\hspace{-0.4pt} \text{if}~\fgetRoot(s, i)\neq \kappa \land \fgetCont(s, i, \kappa) \neq \mathit{nil} \\
\prompttrans(s, i) &::= s'~\text{where}~s' = s~\text{except}~s'_\text{inst}(i)_\text{pstack} \mapsto \text{push}(s_\text{inst}(i)_\text{pstack}, \{ \text{ctable} = \mathit{nil}^*, \text{root} = \mathit{nil}, \text{inst} = i \}) \\
\promptendtrans(s, i) &::= s'~\text{where}~s' = s~\text{except}~s'_\text{inst}(i)_\text{pstack} \mapsto \text{pop}(s_\text{inst}(i)_\text{pstack}) \quad\quad\quad\quad\quad\quad\quad\quad\quad\quad\quad\quad\quad\hspace{0.4pt} \text{if}~ \fgetRoot(s, i) = \mathit{nil}
\end{align*}

\begin{align*}
	\fgetRoot(s, i) &::= \text{top}(s_\text{inst}(i)_\text{pstack})_\text{root} \\
	\fgetCont(s, i, \kappa) &::= \text{top}(s_\text{inst}(i)_\text{pstack})_\text{ctable}(\kappa) \\
	\fsetRoot(s, i, \kappa_R^?) &::= s'~\text{where}~s' = s~\text{except}~\text{top}(s'_\text{inst}(i)_\text{pstack})_\text{root} \mapsto \kappa_R^? \\
	\fsetCont(s, i, \kappa, \gamma^?) &::= s'~\text{where}~s' = s~\text{except}~\text{top}(s'_\text{inst}(i)_\text{pstack})_\text{ctable}(\kappa) \mapsto \gamma^?
\end{align*}

\caption{Semantics of \name.}
\label{additional-semantic-rules}
\end{figure*}

\paragraph{New reduction rules.}

The semantics of WebAssembly define a reduction relation ($\hookrightarrow$)
that is congruent with local contexts ([Cong] in \cref{additional-semantic-rules}).
However, full congruence with local contexts does not hold in the presence of
first-class continuations. Therefore, \name introduces a new reduction relation
($\rightsquigarrow$) for programs that contain $\control{h}$ and $\restore$
instructions (\cref{additional-semantic-rules}). The extended semantics refer
to the original WebAssembly reduction relation ($\hookrightarrow$), using the
[Cong] rule, but there is no equivalent rule for $\rightsquigarrow$.
If there
is a reduction which involves no use of $\control{h}$ or $\restore$, then it is also a 
valid reduction which \emph{might} make use of $\control{h}$ or $\restore$, as given
in the [No-Ctrl] rule.

The $\control{h}$ instruction receives a single argument ($v$) and calls
the function $h$, passing it a new continuation ID ($\kappa$) and the argument
$v$. The continuation ID is bound to the current continuation ($\Lmax$) and local variables ($v_l^*$), and the
call to $h$ is followed by a $\op{trap}$: i.e., it is a runtime error to return
normally from $h$. For simplicity, $\control{h}$ makes a direct call to a
function $h$. However, when an indirect call is necessary, it is possible to
use $v$ to pass the index of a function to $h$.
  
The $\mathbf{restore}$ instruction receives a continuation ID ($\kappa$) and
a restore value ($v$). The instruction dynamically checks that $\kappa$ is a
valid continuation ID. If $\kappa$ is valid, it restores the local variables
(${v_l^*}'$) and the stack (${\Lmax}'$) that is associated with $\kappa$, and
returns $v$ to the stack. The $\mathbf{restore}$ instruction also marks the
continuation ID ($\kappa$) as $\mathit{nil}$ in the continuation table, which allows it
to be reused by subsequent calls to $\op[h]{control}$. Finally, when restoring
the root continuation, $\op{restore}$ sets the root ID back to $\mathit{nil}$, and
leaves it untouched otherwise.
Note that $\restore$ is abortive rather than functional, in
the sense that $\restore$ aborts the current continuation and instructions
following $\restore$ will never be executed.
It is a runtime error to call $\restore$ on a continuation ID ($\kappa)$ that
is un-allocated, or to invoke $\restore$ within the root continuation. In
either case, a $\op{trap}$ occurs.

We need the $\continuationcopy$ instruction to create a copy of a saved
continuation, so that a program can restore a continuation several times if
needed. This instruction assigns a new continuation ID to the copy. 
A $\op{trap}$ occurs if the provided
continuation ID is mapped to $\mathit{nil}$. The $\continuationdelete$
instruction deallocates an continuation without restoring it, and may be needed
to avoid memory leaks in certain applications.

In the presence of first-class continuations,
a function $f$ may now never return to the call site or may
return multiple times. Motivated by a need for safe FFI, the goal of a 
$\prompt{\textit{tf}}{e^*}$ instruction
\footnote{
	$\mathit{tf}$ is a type annotation of the body ($e^*$) of the prompt, 
	and is not important to understand the semantics.
}
is to evaluate the body $e^*$ such that $e^*$ is guaranteed to finish evaluation exactly once 
(or trap/diverge), and trap otherwise. 
Note that this is similar to Felleisen's prompt~\cite{felleisen:1988:the},
but in cases where Felleisen's prompt alters the control flow, \name's $\op{prompt}$ traps.
This design is due to the fact that our $\restore$ operator is abortive 
rather than functional.
Evaluation of $\prompt{\textit{tf}}{e^*}$ involves first pushing a
prompt context onto the prompt stack with a blank continuation table and the 
$\textit{root}$ ID set to $\textit{nil}$, then executing
$e^*$ inside a scoped block, and finally executing the administrative non-user
accessible instruction $\promptend$. Note that if $e^*$ were to contain branches 
to labels outside of the $\op{prompt}$, the execution of $\promptend$ could be 
skipped. The validation rules discussed below outlaw such branches. 
The safety properties of $\op{prompt}$ during FFI is discussed in \cref{sec:safety-properties}.
Evaluating a $\promptend$ instruction pops and discards the top 
prompt context from the prompt stack.

\begin{figure}[t]
\footnotesize
\[
C ::= \{ \dots, \textrm{label} \; ((t^*)^*)^*, \text{pstack} \{ \text{ctable} (t^* \mid \mathit{nil})^*, \text{root} (\kappa_R \mid \mathit{nil}) \}^* \}\vspace{1em}
\]
\begin{prooftree}
	\hypo{C_{\mathrm{func}}(h) = \mathrm{i64} \; \mathrm{i64} \to \epsilon}
	\infer1{C \vdash \control{h} : \mathrm{i64} \to \mathrm{i64}}	
\end{prooftree}\\\vspace{2em}
\begin{prooftree}
	\infer0{C \vdash \restore : t_1^*\; \mathrm{i64} \; \mathrm{i64} \to t_2^*}	
\end{prooftree}\\\vspace{2em}
\begin{prooftree}
	\infer0{C \vdash \continuationcopy : \mathrm{i64} \to \mathrm{i64}}	
\end{prooftree}\\\vspace{2em}
\begin{prooftree}
	\infer0{C \vdash \continuationdelete : \mathrm{i64} \to \epsilon}	
\end{prooftree}\\\vspace{2em}
\begin{prooftree}
	\hypo{\textit{tf} = t_1^n \to t_2^m}
	\hypo{C \{\textrm{label} = C_\textrm{label}; ((t_2^m)), \; \textrm{return} = \epsilon \} \vdash e^* : \textit{tf}}
	\infer2{C \vdash \prompt{\textit{tf}}{e^*} : \textit{tf}}
\end{prooftree}
\caption{Type Checking of \name.}
\label{additional-verification-rules}
\end{figure}
\begin{figure}
\footnotesize
[Root]\quad\begin{prooftree}
	\hypo{\vdash_i s ;~v^*;~e^* : t^*}
	\hypo{\vdash s : S}
	\hypo{S_\text{inst}(i)_\text{roots} = \textit{nil}^*}
	\infer3{\vdash_i^k s ;~v^*;~e^* : t^*}
\end{prooftree}\\\vspace{2em}
[Non-Root]\quad\begin{prooftree}
	\hypo{\vdash s : S}
	\infer[no rule]1{\vdash_i s ;~v^*;~e^* : t^*}
	\hypo{
		p_R = \max \{ p \mid S_\text{inst}(i)_\text{pstack}(p)_\text{root} \neq \mathit{nil} \}
	}
	\infer[no rule]1{\kappa_R = S_\text{inst}(i)_\text{pstack}(p_R)_\text{root}}
	\infer[separation=1em]2{\vdash_i^k s ;~v^*;~e^* : S_\text{inst}(i)_\text{pstack}(p_R)_\text{ctable}(\kappa_R)}
\end{prooftree}
\caption{Typing delimited instructions.}
\label{fig:store-typing}
\end{figure}


\subsection{Validation}
\label{sec:validation}
Validation (type checking) is accomplished in WebAssembly by
assigning each instruction a type describing the values it pops from the stack
and the values it pushes onto the stack. For example, the type of an
add instruction ($\op{i32.add}$) is 
$\mathrm{i64} \; \mathrm{i64} \to \mathrm{i64}$. In addition, the context ($C$)
stores information during the type checking algorithm, such as the types of
functions.

\cref{additional-verification-rules} shows the type checking rules for \name.
The type checking of \name fits easily into the existing type checking framework
of WebAssembly, since we check dynamically that continuation IDs are valid, 
similar to the type checking of indirect function calls.

The type checking of $\restore$, $\continuationcopy$, and $\continuationdelete$
instructions is straightforward as they are all typed independent of
the context ($C$). In particular, these instructions do not statically
type check validity of continuation IDs, 
beyond being the correct type ($\mathrm{i64}$), 
since the semantics in \cref{additional-semantic-rules} check continuation ID
validity at runtime. The type checking of a $\control{h}$ instruction does 
involve checking a side condition in the context: 
in order to type check $\control{h}$, the handler function ($h$) is looked up
in the context ($C$), and checked to have the correct type of a control handler 
function (receives two $\mathrm{i64}$ arguments and returns nothing).

Type checking the $\op{prompt}$ instruction is the most interesting case.
Semantically, $\prompt{\textit{tf}}{e^*}$ must 1)~prepare the prompt environment, 
2)~execute $e^*$, and 3)~teardown the prompt environment 
(i.e., execute the $\promptend$ administrative instruction). However, consider that
$e^*$ may contain branch instructions jumping to labels lexically outside of
the $\op{prompt}$, which would then incorrectly be able to jump beyond 
tearing-down of the prompt environment. To remedy this, we use the type checker to 
outlaw branching instructions which jump beyond the scope of the 
$\op{prompt}$, though still allow branches within the $\op{prompt}$.

We extend type-checking contexts ($C$) to store
a stack of stacks of labels, as shown in the top of 
\cref{additional-verification-rules} ($\textrm{label} ((t^*)^*)^*$).
Implicitly, we define the notation of context label extension 
$C, \textrm{label} (t^*)$ used in previous WebAssembly type checking rules to mean
that the label $(t^*)$ is pushed onto the \emph{top-most} stack in $C$ 
(or in a new stack if none exist), and 
likewise the notation $C_{\textrm{label}(i)}$ we define to mean indexing by $i$ into 
the \emph{top-most} stack in $C$. These implicit re-definitions allow all the other
WebAssembly type checking rules to remain untouched. With this machinery in place,
the type checking rule for $\op{prompt}$ can be given, which closely mirrors the
type checking rule of $\op{block}$, except that an entire new label stack is 
pushed into the context and the return label is invalidated.

An alternative approach could be to modify the
semantics to force the $\promptend$ instruction to be run even when branching past
it. However, this would require significant changes to how branch instructions
are specified in the WebAssembly semantics, and would significantly impact
code generation.

\subsection{Safety Properties of \name}
\label{sec:safety-properties}

We first prove the safety of \name, building on the safety of WebAssembly.
We then consider safe interoperation with a host language.


\paragraph{Safety of standalone \name.}

WebAssembly is equipped with a syntactic type soundness
theorem~\cite{haas:2017:bringing, watt:2018:mechanising, wasm:coq}, which
we build on.

WebAssembly's instruction typing relation ($\vdash_i e^*;t^*$) calculates a
sequence of types ($t^*$), which specify the types of the values that are left
on the stack by the instructions ($e^*$). These types are preserved by each
step of evaluation ($\hookrightarrow$). However, if a step 
captures or restores a continuation ($\rightsquigarrow$), the type of the
current instruction sequence may change.

To address this, we introduce a new typing relation ($\vdash_i^k$) which
extracts the type of the unique stack nested most deeply in prompts which has
been invoked through a chain of root stacks (\cref{fig:store-typing}).
We call this stack the \emph{primary root stack}. 
There are two cases to this relation: 1)~when the current instruction sequence
 is the primary root stack, we return its type ([Root]), and 2)~
if not, we extract the type of the saved primary root stack from the store
([Non-Root]).
Using this typing relation, we prove progress and preservation for \name.

\begin{theorem}[$\rightsquigarrow$ Preservation]
	If $\vdash_i^k s;~v^*;~e^* : t^*$ and $s;~v^*;~e^* \rightsquigarrow_i s';~v'^*;~e'^*$, then $\vdash_i^k s';~v'^*;~e'^* : t^*$.
\end{theorem}

\begin{theorem}[$\rightsquigarrow$ Progress]
	If $\vdash_i^k s; v^*; e^* : t^*$, then either $e^* = v'^*$ or $e^* = \textbf{trap}$ or $s; v^*; e^* \rightsquigarrow_i s'; v'^*; e'^*$.
\end{theorem}

\noindent\ifappendix
The proofs of both theorems are available in the appendix.
\else
The proofs of both theorems are available at \url{https://wasmk.github.io}.
\fi

\paragraph{Safe interoperation.}

A WebAssembly runtime environment is typically embedded in a host language, and
offers an API that allows function calls from either language to the other.
For example, Wasmtime supports interoperability with Rust, and web browsers
support interoperability with JavaScript. Neither Rust nor JavaScript support
continuations, and require foreign function calls to return exactly once.
The $\op{prompt}$ operator allows us to enforce this dynamically.
\name automatically inserts a $\op{prompt}$ block around a foreign call into \name.
This design is
similar to Scheme / Racket, but differs in two regards. First, Scheme / Racket allow FFI to
be unsafe as they do not forcibly wrap every FFI call in a \texttt{prompt}, while \name
prioritizes safety over some flexibility. Second, Scheme / Racket will not abort the program
upon control flow which violates the exactly-once semantics of \texttt{prompt}, but will
instead alter the control flow~\cite{felleisen:1988:the}. In keeping with using $\op{prompt}$ strictly to enforce FFI
safety, \name considers it a programmer error to attempt to violate such safety.

\section{Leveraging \name in Existing Compilers}
\label{sec:in-practice}

Since \name does not alter the semantics of existing WebAssembly instructions,
it ought to be easy to use \name to implement continuations in an existing
compiler. However, today's compilers use code generation techniques that
require a little extra care.

For example, consider Emscripten, which compiles C to WebAssembly. A typical C
compiler would allocate local variables on the machine stack, and Emscripten is
no exception. However, whereas a C program can obtain a pointer to a local,
stack-allocated variable---a common operation in C programs---it is not
possible to do so in WebAssembly. The WebAssembly stack is not stored in linear
memory, and programs can only obtain pointers to values in linear memory.
Therefore, to support these programs, Emscripten allocates local variables on
the WebAssembly stack when possible, but uses linear memory when necessary.
Emscripten generates code that reserves a block of memory to store the
heap-allocated portion of the stack, and uses global variables that emulate
stack and frame pointers.

\cref{sec:c-k} presented \cname, which extends Emscripten with
continuations. Our extension adds new library functions that each correspond to
a \name instruction, which manage saving and restoring the WebAssembly stack.
However, we need to ensure that these operations correctly save and restore
the heap-allocated portion of the stack (\name cannot do this automatically,
since it is source-language neutral). Therefore, we insert code at the call
site for each \cname operation to manipulate Emscripten's
global stack pointer values. For example, at a call site of \texttt{control},
we insert code that saves the current heap-allocated portion of
the stack and the value of the stack pointer into a table. Similarly, at
a call site of \texttt{restore},  we insert code that restores
the heap-allocated portion of the stack and stack pointer
from the table.

This problem is not unique to Emscripten. For example, the Go compiler's
WebAssembly backend creates a copy of the WebAssembly stack in linear memory to
support garbage collection and Goroutines. We speculate that \name would
allow the Go compiler to store non-pointer variables on the
WebAssembly stack, which may improve
the performance of numeric code. However, GC roots would still have to be
stored in linear memory.

\section{Implementation}
\label{sec:jit}

We implement \name as an extension to \emph{Wasmtime}, which is a standalone,
JIT-based runtime system for WebAssembly.\footnote{Our implementation is available at \url{https://wasmk.github.io}.} Wasmtime is written in Rust and
primarily developed by Mozilla.
Wasmtime, and other WebAssembly JITs, use the native machine stack to store
both values and return addresses. Wasmtime also performs register allocation
to avoid using the stack when possible. Therefore, our \name implementation
has to manage the native stack and registers, and take care to follow
the calling convention that Wasmtime employs.

\paragraph{Capture and restore.}

The implementation of $(\mathbf{control} \; h)$ involves several steps.
1)~It uses a free list to 
allocate an unused continuation ID. 2)~It associates this continuation ID with
a new continuation object, which holds the values of machine registers
that are not caller-saved, which includes the stack and instruction
pointers. 3)~It allocates a new block of memory to hold subsequent stack
frames, and sets the stack pointer to point to this block of memory.
4)~It jumps to the WebAssembly function $h$, which receives the
new continuation ID. To further improve performance, we preallocate a pool
of memory to hold new stacks.

The implementation of $\mathbf{restore}$ is straightforward, since its
principal task is to restore the registers saved by $\mathbf{control}$ in
the continuation object. To ensure safety, we 1)~ensure that the continuation
ID is associated with a valid continuation object, 2)~delete the continuation
object so that it cannot be restored again, and 
3)~reclaim the memory used by the current stack.

\paragraph{Copying continuations.}

To copy a continuation, we allocate a new continuation ID, and duplicate a
continuation object, but have to carefully tackle all pointers within the
continuation object. The continuation object stores an instruction pointer,
which can be freely copied. However, we have to update the saved stack pointer
to point to the duplicate copy of the saved stack. This is sufficient for
Wasmtime, but other implementations may require extra work. For example, if an
implementation stores pointers into the stack in registers or on the stack
itself, they must be updated to point to the copy.




\section{Evaluation}
\label{sec:Evaluation}

\begin{figure}
\centering
\includegraphics[width=\columnwidth]{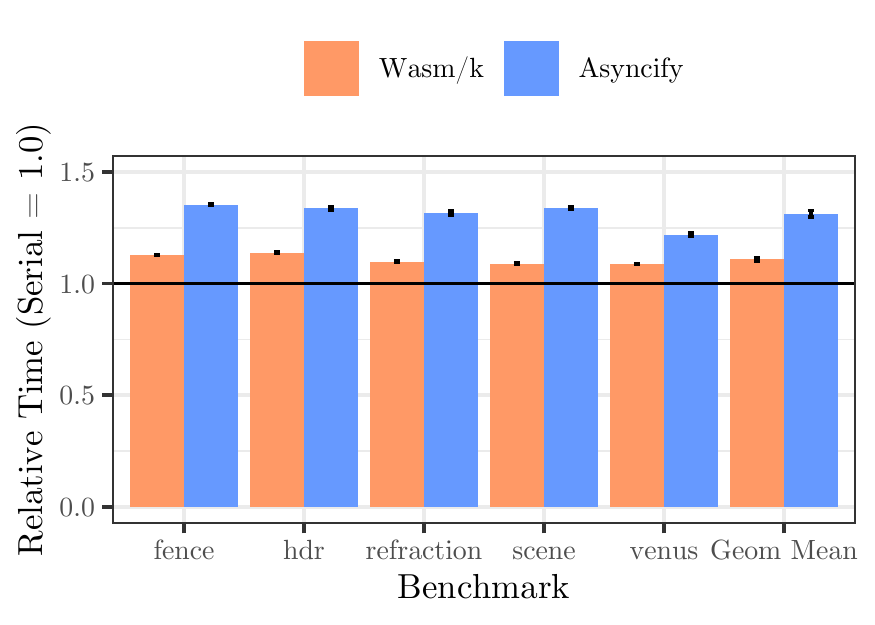}
\caption[Caption for C-Ray]{Performance of green threads implemented using \name and Asyncify in a ray tracing application \footnotemark. All experiments were performed on a 64 bit 3.3GHz 4-Core CPU on Ubuntu. Error bars show the 95\% confidence interval of the overhead, over six trials.}
\label{fig:plot-ray1}
\end{figure}
\footnotetext{Commit \texttt{21124ee} of the C-Ray fork available at \url{https://wasmk.github.io} was used in this experiment.}

In this section we compare \name to the natural alternative: which is to
implement continuations using a whole program transformation that doesn't require
any change to WebAssembly.

\term{Asyncify}~\cite{wasm:asyncify} is a tool that simulates non-blocking I/O
in WebAssembly. It extends WebAssembly with control operators that are similar
to one-shot continuations, and outputs standard WebAssembly that simulates
control flow. We use both \name and Asyncify to implement a green threading
library (\cref{sec:using-wasmk}), which allows us to directly compare the
performance of threaded programs.

As a benchmark, we use C-Ray~\cite{wasm:c-ray}, which is a ray tracer
implemented in approximately 9,500 lines of C. Ray tracers are
compute-intensive, and take a long time to render the final, full-quality
image. However, because they can compute lower quality rendering approximations
incrementally, it should be possible to display incremental rendering results,
to appear more responsive to the user. C-Ray performs rendering computations
on background threads (using
\texttt{pthreads}), while the main thread periodically displays the current
scene. We ported C-Ray to use our green threading library, and inserted
thread yields in the main rendering loop, which yielded
approximately once every 25 ms.

\paragraph{Code size.}

Code size is particularly important for web browsers, which download code on
demand, and a key factor of WebAssembly's design is that it has a compact
binary file format. The size of the C-Ray WebAssembly program is
$1.3\times$ larger with Asyncify than it is with \name.

\paragraph{Performance.}

\cref{fig:plot-ray1} shows the time needed to complete ray tracing on five different
visual scenes using \name and Asyncify, with the geometric mean over all scenes shown
in the last two columns on the right.
As a baseline, we use C-Ray running in
WebAssembly with no threading. Note that the baseline has limited utility,
since it cannot show intermediate results. However, it does illustrate the
overhead that both \name and Asyncify introduce. The mean running time of \name
is $1.1\times$ the running time
without threads. In contrast, the mean running time of Asyncify is
$1.3\times$ the running time
without threads. Asyncify is slower, since it introduces several loads, stores,
and branches to the compiled code. The smaller slowdown that \name introduces is the cost of
checking whether it is time to switch threads.


\section{Related Work}
\label{sec:Related Work}

\paragraph{WebAssembly.}

\name extends the formal semantics of WebAssembly
1.0~\cite{haas:2017:bringing}. There are several proposed extensions to
WebAssembly 1.0, not all of which have been implemented in production web
browsers. 
The threading proposal~\cite{wasm:threads} extends WebAssembly
with support for atomic memory operations and synchronization
primitives, but leaves the API for thread creation up to
each WebAssembly runtime implementation. Thus far, the \texttt{pthread}
API has been supported in some browsers. Watt et al. build on the threading
proposal by formalizing a semantics and memory model for concurrent
WebAssembly~\cite{watt:2019:weakening}. This work on robust support
for concurrency via physical threads is an important step for WebAssembly,
and is orthogonal and complimentary to \name: 
both aspects are needed for efficient implementations
of goroutines which can utilize all CPU cores.

Another proposal extends WebAssembly with support 
for exception handling~\cite{wasm:exceptions}, which is a
form of limited stack manipulation. An interesting question of semantics not
addressed in this work is how \name would interact with exception handling.
In this direction, there is prior work on supporting both
delimited continuations and exception handling~\cite{flatt:2007:adding}.

An alternative to supporting continuations natively is to implement them by
source-to-source transformation~\cite{baxter:2018:putting,pettyjohn:cm}.
Asyncify~\cite{wasm:asyncify} does so for WebAssembly, and the Go compiler uses
a similar approach to support Goroutines. Our evaluation
(\cref{sec:Evaluation}) shows that \name is significantly faster than
source-to-source transformation, and produces smaller programs.

A recent discussion sketched an alternative design for WebAssembly
continuations~\cite{wasm:effects} that is based on extending exception
handlers with general effect handlers. Our design is orthogonal to
exception handling and makes fewer changes to the WebAssembly 1.0
type system. To the best of our knowledge, this alternative design has not
been implemented at this time.

Finally, there exist related strategies of program execution
control. Existing interpreters or virtual machines which feature execution control
mechanisms can be compiled to WebAssembly, such as the Lua VM 
(implemented in C)~\cite{wasm:lua-vm-wasm} which features coroutines. 
This is certainly a viable and straightforward strategy to allow 
stack-manipulating code to run in a WebAssembly environment, 
but may not be able to achieve performance comparable to compiling to WebAssembly.
In addition, debuggers can be seen as a form of execution control, 
as code can be paused and resumed,
but unlike with first-class continuations, the program control is not internally observable.
Debugger support for WebAssembly has recently been explored in the context of
microcontrollers~\cite{wasm:WARDuino}.

\paragraph{Continuations.}

We adapt Sitaram and Felleisen's \term{control}
operator~\cite{sitaram:1990:control} for WebAssembly. Our design accounts for
the fact that WebAssembly has neither first-class functions, nor garbage
collection: programs must explicitly delete unused continuations, and our
new control operators take additional arguments that are not necessary in
languages that support closures. We rely on control delimiters to ensure that
WebAssembly programs always safely interoperate with host languages that do not
support continuations, such as JavaScript. However, it should be possible to
adapt other control operators as
well~\cite{felleisen:1988:the,danvy:1990:abstracting}.

A goal of \name is to show that delimited continuations can be implemented
efficiently in a modern WebAssembly JIT. Our implementation uses a contiguous
stack, since it does not require global changes to code generation. However,
there are a variety of other implementation strategies with different
tradeoffs~\cite{farvardin:2020:from}.

\section{Conclusion}
\label{sec:Conclusion}
We have presented \name, an extension to WebAssembly that adds support
for delimited one-shot continuations with explicit copying. We have prototyped
all phases of \name, with examples in C/C++, code generation from C/C++ to \name,
formal semantics of \name, and an efficient implementation of \name in an existing JIT.
We hope that \name is a step toward helping WebAssembly be an effective compilation target
for a large variety of high-level languages.

\ifappendix
\else
\balance
\fi

\ifanonymous
\else
\section*{Acknowledgements}

This work was partially supported by the National Science Foundation under grants CCF-2007066, 
CCF-1453474, 
and CCF-1564162. 

\fi

\bibliographystyle{ACM-Reference-Format}
\bibliography{bib/venues-short,bib/main}

\ifappendix

\newpage

\appendix
\onecolumn

\section{Appendix}
\label{appendix:Technical Appendix}

\subsection{Continuation Table Typing Relation}
\cref{fig:store-typing} presented the core rules for the continuation
table typing relation $\vdash_i^k$. 
\cref{fig:store-typing} relies on an extension of
the WebAssembly store typing, in which a store $s$ is given a type $S$,
now extended with the $\text{pstack}$ field as shown in 
\cref{additional-verification-rules}. Modifications to existing WebAssembly store 
typing rules and additional rules are required to compute
the $\text{pstack}$ field of $S$. 
These modifications and new rules are given in 
\cref{additional-store-typing-appendix}, 
where modified WebAssembly rules are marked with a $(\dagger)$.

\begin{figure}[h]
\footnotesize
\begin{prooftree}
	\hypo{(S \vdash \mathit{cl} : \mathit{tf})^*}
	\hypo{(\vdash v : t)^*}
	\hypo{(S_\text{tab}(i) = n)^?}
	\hypo{(S_\text{mem}(j) = m)^?}
	\hypo{(S;~\epsilon \vdash_\mathit{ci} v_l^*;~\Lmax[\op[0]{i64.const}] : \mathit{tk}^*)^{?**}}
	\hypo{(\text{ctable}(\kappa_R) \neq \mathit{nil})^{?*}}
	\infer6{
		S \vdash \{ \text{func}~\mathit{cl}^*, \text{glob}~v^*, \text{tab}~i^?, \text{mem}~j^?, \text{pstack}~\{ \text{ctable}~(\{ \text{locals}~v_l^*, \text{ctx}~\Lmax, \text{inst}~\mathit{ci} \} \mid \mathit{nil})^*, \text{root} (\kappa_R \mid \mathit{nil}) \}^* \}	}
	\infer[no rule]1{
		: \{ \text{func}~\mathit{tf}^*, \text{global}~(\text{mut}^?~t)^*, \text{table}~n^?, \text{memory}~m^?, \text{pstack} \{ \text{ctable}(\mathit{tk}^* \mid \mathit{nil})^*, \text{root}(\kappa_R \mid \mathit{nil})\}^*\}
	}
\end{prooftree}
\quad $(\dagger)$\\\vspace{2em}
[Prompt-End]\quad\begin{prooftree}
	\hypo{S_\text{inst}(i)_\text{pstack}(0)_\text{root} = \mathit{nil}}
	\infer1{S;~C \vdash_i \promptend : \epsilon \to \epsilon}
\end{prooftree}
\caption{Additional Store Typing Rules.}
\label{additional-store-typing-appendix}
\end{figure}

In addition, we define the shorthand notation 
$S_\text{inst}(i)_\text{roots}$ used in \cref{fig:store-typing}
to be the vector $S_\text{inst}(i)_\text{pstack}(\cdot)_\text{root}$.

\subsection{Proofs of Type Safety Properties}

{
\setlength{\parindent}{0pt}
\newlength{\enumindent}
\setlength{\enumindent}{15pt}

\renewlist{enumerate}{enumerate}{9}

\begin{lemma}[Context Substitution]
\label{lemma:substitution}
If
\begin{enumerate}
	\item[1.] $\vdash_i s;~v^*;~L^k[e^*] : t^*$, and
	\item[2.] $\vdash_i s;~v^*;~e^* : t_e^*$, and
	\item[3.] $\vdash_i s;~v^*;~e'^* : t_e^*$,
\end{enumerate}
then $\vdash_i s;~v^*;~L^k[e'^*] : t^*$
\end{lemma}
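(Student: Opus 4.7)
The plan is to prove the lemma by induction on the structure of the local context $L^k$ (equivalently, on $k$), peeling off one nested labelled block at a time until the hole is exposed, substituting with hypothesis~3, and reassembling using the WebAssembly composition and labelled-block typing rules.

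\textbf{Base case ($k=0$).} Here $L^0 = v_0^*\,[\_]\,e_0^*$, so $L^0[e^*] = v_0^*\,e^*\,e_0^*$ and $L^0[e'^*] = v_0^*\,e'^*\,e_0^*$. The strategy is to invert the WebAssembly sequence (composition) typing rule on hypothesis~1 to obtain intermediate stack types $t_1^*, t_2^*, t_3^*$ such that $v_0^* : t_0^* \to t_1^*$, then $e^* : t_1^* \to t_2^*$, then $e_0^* : t_2^* \to t_3^*$, with $t^*$ agreeing on the ends. By hypotheses 2 and 3 and uniqueness of assigned stack types at each boundary (combined with inversion of the typing for $e^*$ embedded in this larger sequence), the middle segment type $t_1^* \to t_2^*$ must be compatible with $t_e^*$, so $e'^*$ admits the same middle segment type. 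Recomposing with the sequence rule yields the desired type derivation for $L^0[e'^*]$.

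\textbf{Inductive case ($k = k'+1$).} Here $L^{k'+1} = v_0^*\,\mathbf{label}_n\{e_L^*\}\,L^{k'}\,\op{end}\,e_0^*$. I would invert the typing derivation on hypothesis~1: the sequence rule splits off the $\mathbf{label}$ block, and the $\mathbf{label}$ typing rule in turn requires typing its body $L^{k'}[e^*]$ under a context $C' = C,\,\mathrm{label}\,(t_L^*)$ extended with a fresh label. Applying the induction hypothesis to $L^{k'}$ under this extended context (noting that hypotheses 2 and 3 still hold there, since adding a label only enlarges the context and is conservative for instructions that do not branch to the new label; moreover weakening is admissible for instruction typing in WebAssembly), we obtain a typing derivation of $L^{k'}[e'^*]$ with the same type as $L^{k'}[e^*]$ under $C'$. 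Rebuilding the $\mathbf{label}$ block and re-applying the sequence rule reconstructs a typing derivation for $L^{k'+1}[e'^*]$ at type $t^*$.

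\textbf{Main obstacle.} The delicate point is the context-management bookkeeping in the inductive step: the induction hypothesis must be applied under the extended label context $C'$ inside the $\mathbf{label}$ block, so I need to know that hypotheses~2 and~3 lift to that context. This is justified by standard weakening for WebAssembly's instruction typing (adding labels to $C$ that are not referenced by $e^*$ or $e'^*$ does not affect their type), and by the fact that $e^*$ and $e'^*$ are the plugged contents of the hole, so any branch targets they contain that escape the hole are already accounted for in the original derivation. The rest of the argument is a routine decomposition--substitution--recomposition using inversion lemmas already implicit in the WebAssembly typing judgement; the \name extensions play no role here since local contexts $L^k$ never cross a $\mathbf{prompt}$ boundary.
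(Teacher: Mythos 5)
Your proposal is correct and takes essentially the same route as the paper: the paper disposes of this lemma in a single line as a direct consequence of the transitive (composition) rule for WebAssembly typing, and your induction on $k$ with decomposition at the hole, substitution via hypothesis~3, and recomposition through the sequence and $\mathbf{label}$ rules is precisely the detailed unpacking of that appeal. The extra bookkeeping you supply (inversion of the composition rule and weakening of the label context inside nested blocks) is exactly what the paper leaves implicit, so there is no substantive divergence.
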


\begin{proof}
This is a direct consequence of the transitive rule for WebAssembly type checking.
\end{proof}

\begin{theorem}[$\rightsquigarrow$ Preservation]
	If $\vdash_i^k s;~v^*;~e^* : t^*$ and $s;~v^*;~e^* \rightsquigarrow_i s';~v'^*;~e'^*$, then $\vdash_i^k s';~v'^*;~e'^* : t^*$.
\end{theorem}

\begin{proof}
Suppose that:
\begin{enumerate}
	\item[H1)] $\vdash_i^k s;~v^*;~e^* : t^*$ and
	\item[H2)] $s;~v^*;~e^* \rightsquigarrow_i s';~v'^*;~e'^*$
\end{enumerate}
We want to show that $\vdash_i^k s';~v'^*;~e'^* : t^*$.

\noindent By H1 and the typing rules [Root] and [Non-Root] for $\vdash_i^k$, we know:
\begin{enumerate}
	\item[H3)] $\vdash s : S$
\end{enumerate}
Based on the [Root] and [Non-Root] typing rules for H1, there are two cases:
\begin{enumerate}[label=\underline{Case C\arabic*}, align=left, leftmargin=1\enumindent]
	\item $S_\text{inst}(i)_\text{roots} = \mathit{nil}^*$. In this case we also know:
	\begin{enumerate}[align=left, leftmargin=1\enumindent]
		\item[H4)] $\vdash_i s;~v^*;~e^* : t^*$
	\end{enumerate}
	By H2 there are 10 subcases to consider:
	\begin{enumerate}[label*=\underline{.\arabic*}, align=left, leftmargin=1\enumindent]
		\item $s; v^*; \Lmax[\op[v]{i64.const} \control{h}] \rightsquigarrow_i s'; \epsilon ; \op[\kappa]{i64.const} \; (\mathbf{i64.const} \; v) \; (\mathbf{call} \; h) \; \mathbf{trap} \land (s',\kappa) = \controltrans(s, i, v^*, \Lmax)$
		
			By Case C1 and C1.1:
			\begin{enumerate}[align=left, leftmargin=1\enumindent]
				\item[H5)] $\controltrans(s, i, v^*, \Lmax) = (s', \kappa) = (\fsetCont(\fsetRoot(s, i, \kappa), i, \kappa, \{ \mathrm{locals} = v_l^*, \mathrm{ctx} = \Lmax \}), \kappa)$
				\item[H6)] $\fgetCont(s, i, \kappa) = \mathit{nil}$
			\end{enumerate}
			
			By H3 and H5 we can type check $s'$:
			\begin{enumerate}[align=left, leftmargin=1\enumindent]
				\item[H7)] $\vdash s' : S'$
				\item[H8)] $S'_\text{inst}(i)_\text{pstack}(0)_\text{root} = \kappa$
				\item[H9)] $\forall p \geq 1, S'_\text{inst}(i)_\text{pstack}(p)_\text{root} = \mathit{nil}$
			\end{enumerate}
			
			We can compute the type of the stored stack in $s'$. By \cref{lemma:substitution}, H4 and H5:
			\begin{enumerate}[align=left, leftmargin=1\enumindent]
				\item[H10)] $S'_\text{inst}(i)_\text{pstack}(0)_\text{ctable}(\kappa) = t^*$
			\end{enumerate}
			
			The new stack can be independently type checked. 
			By the $\op{control}$ type checking rule, 
			$S'_\text{inst}(i)_\text{func}(h) = \text{i64}~\text{i64} \to \epsilon$. Thus,
			\begin{enumerate}[align=left, leftmargin=1\enumindent]
				\item[H11)] $\vdash_i s';~\epsilon;~\op[\kappa]{i64.const}\op[v]{i64.const}\op[h]{call}\op{trap} : \epsilon$
			\end{enumerate}
			
			By H7, H8, H9, H10, H11 and the [Non-Root] rule:
			\begin{enumerate}[align=left, leftmargin=1\enumindent]
				\item[H12)] $\vdash_i^k s';~\epsilon;~\op[\kappa]{i64.const}\op[v]{i64.const}\op[h]{call}\op{trap} : t^*$
			\end{enumerate}
			Case C1.1 is complete.

		\item $s; v^*; \Lmax[\op[\kappa]{i64.const} \op[v]{i64.const} \restore] \rightsquigarrow_i s';\; v'^* ;\; {\Lmax}' [(\mathbf{i64.const} \; v)] \land (s', v^*, {\Lmax}') = \restoretrans(s, i, \kappa)$
		
			By C1.2 $\fgetRoot(s, i) \neq \mathit{nil}$. However this is a contradiction with case C1 which implies $\fgetRoot(s, i) = \mathit{nil}$. 
			
			Therefore, Case C1.2 is impossible.
		
		\item $s; v^*; \Lmax[\op[\kappa]{i64.const} \op[v]{i64.const} \restore] \rightsquigarrow_i s;\; v^* ;\; \op{trap} \land \restoretrans(s, i, \kappa)~\text{undefined}$
		
			By the type checking of $\op{trap}$,
			\begin{enumerate}[align=left, leftmargin=1\enumindent]
				\item[H13)] $\vdash_i s;\; v^* ;\; \op{trap} : t^*$
			\end{enumerate}
			By Case C1, H3, H13, and the [Root] rule, we conclude:
			\begin{enumerate}[align=left, leftmargin=1\enumindent]
				\item[H14)] $\vdash_i^k s;\; v^* ;\; \op{trap} : t^*$
			\end{enumerate}
			Case C1.3 is complete.
		
		\item $s;v^*; L^k[\op[\kappa]{i64.const} \continuationcopy] \hookrightarrow_i s';v^*; L^k[\op[\kappa']{i64.const}] \land (s', \kappa') = \copytrans(s, i, \kappa)$
		
			By Case C1.4:
			\begin{enumerate}[align=left, leftmargin=1\enumindent]
				\item[H15)] $\copytrans(s, i, \kappa) = (s', \kappa') = (\fsetCont(s, i, \kappa', \fgetCont(s, i, \kappa)), \kappa')$
				\item[H16)] $\fgetCont(s, i, \kappa') = \mathit{nil}$.
			\end{enumerate}
			
			By H4, Case C1.4, and \cref{lemma:substitution},
			\begin{enumerate}[align=left, leftmargin=1\enumindent]
				\item[H17)] $\vdash_i s';~v^*;~L^k[\op[\kappa']{i64.const}] : t^*$
			\end{enumerate}
			
			By H3, there exists a type for the stack that is being copied:
			\begin{enumerate}[align=left, leftmargin=1\enumindent]
				\item[H18)] $\exists t'^*,\; S_\text{inst}(i)_\text{pstack}(0)_\text{ctable}(\kappa) = t'^*$
			\end{enumerate}
			
			By Case C1, H15, H18, and the store typing relation,
			\begin{enumerate}[align=left, leftmargin=1\enumindent]
				\item[H19)] $\vdash s' : S'$
				\item[H20)] $S'_\text{inst}(i)_\text{roots} = \mathit{nil}^*$
			\end{enumerate}
			
			By H17, H19, H20 and the [Root] rule:		
			\begin{enumerate}[align=left, leftmargin=1\enumindent]
				\item[H21)] $\vdash_i^k s';v^*; L^k[\op[\kappa']{i64.const}] : t^*$
			\end{enumerate}
			Case C1.4 is complete.

		\item $s;v^*; L^k[\op[\kappa]{i64.const} \continuationcopy] \hookrightarrow_i s;v^*; L^k[\op{trap}] \land \copytrans(s, i, \kappa)~\text{undefined}$
		
			By the type checking of $\op{trap}$, H4, and \cref{lemma:substitution},
			\begin{enumerate}[align=left, leftmargin=1\enumindent]
				\item[H22)] $\vdash_i s;\; v^* ;\; L^k[\op{trap}] : t^*$
			\end{enumerate}
			By Case C1, H3, H22, and the [Root] rule, we conclude:
			\begin{enumerate}[align=left, leftmargin=1\enumindent]
				\item[H23)] $\vdash_i^k s;\; v^* ;\; L^k[\op{trap}] : t^*$
			\end{enumerate}
			Case C1.5 is complete.
		
		\item $s;v^*; L^k[\op[\kappa]{i64.const} \continuationdelete] \hookrightarrow_i s';v^*; L^k[\epsilon] \land s' = \deletetrans(s, i, \kappa)$
		
			By Case C1.6:
			\begin{enumerate}[align=left, leftmargin=1\enumindent]
				\item[H24)] $\deletetrans(s, i, \kappa) = s' = \fsetCont(s, i, \kappa, \mathit{nil})$
			\end{enumerate}
			
			By H4, Case C1.6, and \cref{lemma:substitution},
			\begin{enumerate}[align=left, leftmargin=1\enumindent]
				\item[H25)] $\vdash_i s';~v^*;~L^k[\epsilon] : t^*$
			\end{enumerate}
			
			By Case C1, H24, and the store typing relation,
			\begin{enumerate}[align=left, leftmargin=1\enumindent]
				\item[H26)] $\vdash s' : S'$
				\item[H27)] $S'_\text{inst}(i)_\text{roots} = \mathit{nil}^*$
			\end{enumerate}
			
			By H25, H26, H27 and the [Root] rule:		
			\begin{enumerate}[align=left, leftmargin=1\enumindent]
				\item[H28)] $\vdash_i^k s';v^*; L^k[\epsilon] : t^*$
			\end{enumerate}
			Case C1.6 is complete.

		\item $s;v^*; L^k[\op[\kappa]{i64.const} \continuationdelete] \hookrightarrow_i s;v^*; L^k[\op{trap}] \land \deletetrans(s, i, \kappa)~\text{undefined}$
		
			By the type checking of $\op{trap}$, H4, and \cref{lemma:substitution},
			\begin{enumerate}[align=left, leftmargin=1\enumindent]
				\item[H29)] $\vdash_i s;\; v^* ;\; L^k[\op{trap}] : t^*$
			\end{enumerate}
			By Case C1, H3, H29, and the [Root] rule, we conclude:
			\begin{enumerate}[align=left, leftmargin=1\enumindent]
				\item[H30)] $\vdash_i^k s;\; v^* ;\; L^k[\op{trap}] : t^*$
			\end{enumerate}
			Case C1.7 is complete.
			
		\item $s;v^*; L^k[\prompt{\textit{tf}}{e^*}] \hookrightarrow_i s';v^*; L^k[\op{block} \; \textit{tf} \; e^* \; \op{end} \; \op{prompt\_end}] \land s' = \prompttrans(s, i)$
		
			By Case C1.8:
			\begin{enumerate}[align=left, leftmargin=1\enumindent]
				\item[H31)] $\prompttrans(s, i) = s'~\text{where}~s' = s~\text{except}~s'_\text{inst}(i)_\text{pstack} \mapsto \text{push}(s_\text{inst}(i)_\text{pstack}, \{ \text{ctable} = \mathit{nil}^*, \text{root} = \mathit{nil} \})$
			\end{enumerate}
			
			By H4, H31, the type checking of $\op{prompt}$, and \cref{lemma:substitution},
			\begin{enumerate}[align=left, leftmargin=1\enumindent]
				\item[H32)] $\vdash_i s';\; v^* ;\; L^k[\op{block} \; \textit{tf} \; e^* \; \op{end} \; \op{prompt\_end}] : t^*$
			\end{enumerate}
			
			By H3, H31 and store typing:
			 \begin{enumerate}[align=left, leftmargin=1\enumindent]
				\item[H33)] $\vdash s' : S'$
				\item[H34)] $S'_\text{inst}(i)_\text{roots} = \mathit{nil}^*$
			\end{enumerate}
			
			By H32, H33, H34 and the [Root] rule:
			\begin{enumerate}[align=left, leftmargin=1\enumindent]
				\item[H35)] $\vdash_i^k s';\; v^* ;\; L^k[\op{block} \; \textit{tf} \; e^* \; \op{end} \; \op{prompt\_end}] : t^*$
			\end{enumerate}
			Case C1.8 is complete.
			
		\item $s;v^*; L^k[\promptend] \hookrightarrow_i s';v^*;\; L^k[\epsilon] \land  s' = \promptendtrans(s, i)$
		
			By Case C1.9:
			\begin{enumerate}[align=left, leftmargin=1\enumindent]
				\item[H36)] $\promptendtrans(s, i) = s'~\text{where}~s' = s~\text{except}~s'_\text{inst}(i)_\text{pstack} \mapsto \text{pop}(s_\text{inst}(i)_\text{pstack})$
			\end{enumerate}
			
			By H4, H36, the type checking of $\promptend$, and \cref{lemma:substitution},
			\begin{enumerate}[align=left, leftmargin=1\enumindent]
				\item[H37)] $\vdash_i s';\; v^* ;\; L^k[\epsilon] : t^*$
			\end{enumerate}
			
			By H3, H36 and store typing:
			 \begin{enumerate}[align=left, leftmargin=1\enumindent]
				\item[H38)] $\vdash s' : S'$
				\item[H39)] $S'_\text{inst}(i)_\text{roots} = \mathit{nil}^*$
			\end{enumerate}
			
			By H37, H38, H39 and the [Root] rule:
			\begin{enumerate}[align=left, leftmargin=1\enumindent]
				\item[H40)] $\vdash_i^k s';\; v^* ;\; L^k[\epsilon] : t^*$
			\end{enumerate}
			Case C1.9 is complete.
			
		\item $s;v^*; e^* \hookrightarrow_i s';v'^*;\; e'^*$ for some redex in standard WebAssembly.
		
			Since standard WebAssembly redexes do not modify the continuation tables, 
			$\vdash s' : S'$ where 
			$S'_\text{inst}(i)_\text{roots} = \mathit{nil}^*$.
			
			Therefore, by H4, the Preservation theorem of standard WebAssembly,
			and the [Root] rule:
			\begin{enumerate}[align=left, leftmargin=1\enumindent]
				\item[H41)] $\vdash_i^k s';\; v'^* ;\; e'^* : t^*$
			\end{enumerate}
			Case C1.10 is complete.
	\end{enumerate}
	
	\item $\exists p_r~\text{s.t.}~p_r = \max \{ p \mid S_\text{inst}(i)_\text{pstack}(p)_\text{root} \neq \mathit{nil} \}$ . In this case we also know:
	\begin{enumerate}[align=left, leftmargin=1\enumindent]
		\item[H42)] $\kappa_R = S_\text{inst}(i)_\text{pstack}(p_R)_\text{root}$
		\item[H43)] $\exists \tilde{t}^* \vdash_i s;~v^*;~e^* : \tilde{t}^*$
		\item[H44)] $t^* = S_\text{inst}(i)_\text{pstack}(p_R)_\text{ctable}(\kappa_R)$
	\end{enumerate}
	By H2 there are 10 subcases to consider:
	\begin{enumerate}[label*=\underline{.\arabic*}, align=left, leftmargin=1\enumindent]
		\item $s; v^*; \Lmax[\op[v]{i64.const} \control{h}] \rightsquigarrow_i s'; \epsilon ; \op[\kappa]{i64.const} \; (\mathbf{i64.const} \; v) \; (\mathbf{call} \; h) \; \mathbf{trap} \land (s',\kappa) = \controltrans(s, i, v^*, \Lmax)$
		
			There are two sub-sub cases, either $\fgetRoot(s, i) = \mathit{nil}$ or
			$\fgetRoot(s, i) = \kappa_0$ for some $\kappa_0$:
			
			\begin{enumerate}[label*=\underline{.\arabic*}, align=left, leftmargin=1\enumindent]
				\item $\fgetRoot(s, i) = \mathit{nil} \land p_r \neq 0$
					
					By Case C2.1 and C2.1.1:
					\begin{enumerate}[align=left, leftmargin=1\enumindent]
						\item[H45)] $\controltrans(s, i, v^*, \Lmax) = (s', \kappa) = (\fsetCont(\fsetRoot(s, i, \kappa), i, \kappa, \{ \mathrm{locals} = v_l^*, \mathrm{ctx} = \Lmax \}), \kappa)$
						\item[H46)] $\fgetCont(s, i, \kappa) = \mathit{nil}$
					\end{enumerate}
					
					By H3 and H45 we can type check $s'$:
					\begin{enumerate}[align=left, leftmargin=1\enumindent]
						\item[H47)] $\vdash s' : S'$
						\item[H48)] $p_r = \max \{ p \mid S'_\text{inst}(i)_\text{pstack}(p)_\text{root} \neq \mathit{nil} \}$
						\item[H49)] $S'_\text{inst}(i)_\text{pstack}(p_r)_\text{root} = \kappa_R$
						\item[H50)] $S'_\text{inst}(i)_\text{pstack}(p_r)_\text{ctable}(\kappa_R) = t^*$
					\end{enumerate}
				
					The new stack can be independently type checked. 
					By the $\op{control}$ type checking rule, 
					$S'_\text{inst}(i)_\text{func}(h) = \text{i64}~\text{i64} \to \epsilon$. Thus,
					\begin{enumerate}[align=left, leftmargin=1\enumindent]
						\item[H51)] $\vdash_i s';~\epsilon;~\op[\kappa]{i64.const}\op[v]{i64.const}\op[h]{call}\op{trap} : \epsilon$
					\end{enumerate}
					
					By H47, H48, H49, H50, H51 and the [Non-Root] rule:
					\begin{enumerate}[align=left, leftmargin=1\enumindent]
						\item[H52)] $\vdash_i^k s';~\epsilon;~\op[\kappa]{i64.const}\op[v]{i64.const}\op[h]{call}\op{trap} : t^*$
					\end{enumerate}
					Case C2.1.1 is complete.
					
				\item $\fgetRoot(s, i) = \kappa_0 \neq \mathit{nil}$
				
					By Case C2.1 and C2.1.2:
					\begin{enumerate}[align=left, leftmargin=1\enumindent]
						\item[H53)] $\controltrans(s, i, v^*, \Lmax) = (s', \kappa) = (\fsetCont(s, i, \kappa, \{ \mathrm{locals} = v_l^*, \mathrm{ctx} = \Lmax \}), \; \kappa)$
						\item[H54)] $\fgetCont(s, i, \kappa) = \mathit{nil}$
					\end{enumerate}
					
					By H3 and H53 we can type check $s'$:
					\begin{enumerate}[align=left, leftmargin=1\enumindent]
						\item[H55)] $\vdash s' : S'$
						\item[H56)] $p_r = \max \{ p \mid S'_\text{inst}(i)_\text{pstack}(p)_\text{root} \neq \mathit{nil} \}$
						\item[H57)] $S'_\text{inst}(i)_\text{pstack}(p_r)_\text{root} = \kappa_R$
						\item[H58)] $S'_\text{inst}(i)_\text{pstack}(p_r)_\text{ctable}(\kappa_R) = t^*$
					\end{enumerate}
				
					The new stack can be independently type checked. 
					By the $\op{control}$ type checking rule, 
					$S'_\text{inst}(i)_\text{func}(h) = \text{i64}~\text{i64} \to \epsilon$. Thus,
					\begin{enumerate}[align=left, leftmargin=1\enumindent]
						\item[H59)] $\vdash_i s';~\epsilon;~\op[\kappa]{i64.const}\op[v]{i64.const}\op[h]{call}\op{trap} : \epsilon$
					\end{enumerate}
					
					By H55, H56, H57, H58, H59 and the [Non-Root] rule:
					\begin{enumerate}[align=left, leftmargin=1\enumindent]
						\item[H60)] $\vdash_i^k s';~\epsilon;~\op[\kappa]{i64.const}\op[v]{i64.const}\op[h]{call}\op{trap} : t^*$
					\end{enumerate}
					Case C2.1.2 is complete.
					
			\end{enumerate}

		\item $s; v^*; \Lmax[\op[\kappa]{i64.const} \op[v]{i64.const} \restore] \rightsquigarrow_i s';\; v'^* ;\; {\Lmax}' [(\mathbf{i64.const} \; v)] \land (s', v^*, {\Lmax}') = \restoretrans(s, i, \kappa)$

			By C2.2:
			\begin{enumerate}[align=left, leftmargin=1\enumindent]
				\item[H61)] $\fgetRoot(s, i) = \kappa_0 \neq \mathit{nil}$
			\end{enumerate}
			 
			There are two sub-sub cases, either $\kappa = \kappa_0$ or
			$\kappa \neq \kappa_0$:
			
			\begin{enumerate}[label*=\underline{.\arabic*}, align=left, leftmargin=1\enumindent]
				\item $\kappa = \kappa_0$
					
					By C2.2.1:
					\begin{enumerate}[align=left, leftmargin=1\enumindent]
						\item[H62)] $\restoretrans(s, i, \kappa) = (s', v'^*, {\Lmax}') = (\fsetRoot(\fsetCont(s, i, \kappa, \mathit{nil}), i, \mathit{nil}), \; \fgetCont(s, i, \kappa)_\text{locals}, \; \fgetCont(s, i, \kappa)_\text{ctx})$
					\end{enumerate}
					
					By H3 and H62 we can type check s':
					\begin{enumerate}[align=left, leftmargin=1\enumindent]
						\item[H63)] $\vdash s' : S'$
					\end{enumerate}
					
					There are two sub-sub-sub cases, either $p_r = 0$ or $p_r \geq 1$:
					\begin{enumerate}[label*=\underline{.\arabic*}, align=left, leftmargin=1\enumindent]
					
						\item $p_r = 0$
						
							By C2.2.1.1 and H42,
							\begin{enumerate}[align=left, leftmargin=1\enumindent]
								\item[H64)] $\kappa = \kappa_0 = \kappa_R = S_\text{inst}(i)_\text{pstack}(p_R)_\text{root}$
							\end{enumerate}
							
							By C2.2.1.1, H62, and H63:
							\begin{enumerate}[align=left, leftmargin=1\enumindent]
								\item[H65)] $S'_\text{inst}(i)_\text{roots} = \mathit{nil}^*$
							\end{enumerate}
							
							By C2.2.1.1, H64, and \cref{lemma:substitution}:
							\begin{enumerate}[align=left, leftmargin=1\enumindent]
								\item[H66)] $\vdash_i s';~v'^*;~{\Lmax}'[\op[v]{i64.const}] : t^*$
							\end{enumerate}
							
							By H63, H65, H66, and the [Root] rule:
							\begin{enumerate}[align=left, leftmargin=1\enumindent]
								\item[H67)] $\vdash_i^k s';~v'^*;~{\Lmax}'[\op[v]{i64.const}] : t^*$
							\end{enumerate}
							Case C2.2.1.1 is complete.
						
						\item $p_r \geq 1$
							
							By C2.2.1.2 and H63:
							\begin{enumerate}[align=left, leftmargin=1\enumindent]
								\item[H68)] $p_r = \max \{ p \mid S'_\text{inst}(i)_\text{pstack}(p)_\text{root} \neq \mathit{nil} \}$
								\item[H69)] $S'_\text{inst}(i)_\text{pstack}(p_r)_\text{root} = \kappa_R$
								\item[H70)] $S'_\text{inst}(i)_\text{pstack}(p_r)_\text{ctable}(\kappa_R) = t^*$
							\end{enumerate}
							
							The stack we are switching to can be type checked by H3 and \cref{lemma:substitution}:
							\begin{enumerate}[align=left, leftmargin=1\enumindent]
								\item[H71)] $\exists \tilde{t}'^*, \vdash_i s';~v'^*;~{\Lmax}'[\op[v]{i64.const}] : \tilde{t}'^*$
							\end{enumerate}
							
							By H63, H68, H69, H70, H71, and the [Non-Root] rule:
							\begin{enumerate}[align=left, leftmargin=1\enumindent]
								\item[H72)] $\vdash_i^k s';~v'^*;~{\Lmax}'[\op[v]{i64.const}] : t^*$
							\end{enumerate}
							
							Case C2.2.1.2 is complete.

					\end{enumerate}
					
				\item $\kappa \neq \kappa_0$
				
					By C2.2.2:
					\begin{enumerate}[align=left, leftmargin=1\enumindent]
						\item[H73)] $\restoretrans(s, i, \kappa) = (s', v'^*, {\Lmax}') = (\fsetCont(s, i, \kappa, \mathit{nil}), \; \fgetCont(s, i, \kappa)_\text{locals}, \; \fgetCont(s, i, \kappa)_\text{ctx})$
					\end{enumerate}
					
					By H3 and H73 we can type check s':
					\begin{enumerate}[align=left, leftmargin=1\enumindent]
						\item[H74)] $\vdash s' : S'$
						\item[H75)] $p_r = \max \{ p \mid S'_\text{inst}(i)_\text{pstack}(p)_\text{root} \neq \mathit{nil} \}$
						\item[H76)] $S'_\text{inst}(i)_\text{pstack}(p_r)_\text{root} = \kappa_R$
						\item[H77)] $S'_\text{inst}(i)_\text{pstack}(p_r)_\text{ctable}(\kappa_R) = t^*$
					\end{enumerate}
					
					The stack we are switching to can be type checked by H3 and \cref{lemma:substitution}:
					\begin{enumerate}[align=left, leftmargin=1\enumindent]
						\item[H78)] $\exists \tilde{t}'^*, \vdash_i s';~v'^*;~{\Lmax}'[\op[v]{i64.const}] : \tilde{t}'^*$
					\end{enumerate}
					
					By H74, H75, H76, H77, H78, and the [Non-Root] rule:
					\begin{enumerate}[align=left, leftmargin=1\enumindent]
						\item[H79)] $\vdash_i^k s';~v'^*;~{\Lmax}'[\op[v]{i64.const}] : t^*$
					\end{enumerate}
					
			\end{enumerate}

		\item $s; v^*; \Lmax[\op[\kappa]{i64.const} \op[v]{i64.const} \restore] \rightsquigarrow_i s;\; v^* ;\; \op{trap} \land \restoretrans(s, i, \kappa)~\text{undefined}$
					
			By the type checking of $\op{trap}$,
			\begin{enumerate}[align=left, leftmargin=1\enumindent]
				\item[H80)] $\vdash_i s;\; v^* ;\; \op{trap} : \epsilon$
			\end{enumerate}
			
			By Case C2, H3, H42, H44, H80, and the [Non-Root] rule, we conclude:
			\begin{enumerate}[align=left, leftmargin=1\enumindent]
				\item[H81)] $\vdash_i^k s;\; v^* ;\; \op{trap} : t^*$
			\end{enumerate}
			Case C2.3 is complete.
		
		\item $s;v^*; L^k[\op[\kappa]{i64.const} \continuationcopy] \hookrightarrow_i s';v^*; L^k[\op[\kappa']{i64.const}] \land (s', \kappa') = \copytrans(s, i, \kappa)$
					
			By Case C2.4:
			\begin{enumerate}[align=left, leftmargin=1\enumindent]
				\item[H82)] $\copytrans(s, i, \kappa) = (s', \kappa') = (\fsetCont(s, i, \kappa', \fgetCont(s, i, \kappa)), \kappa')$
				\item[H83)] $\fgetCont(s, i, \kappa') = \mathit{nil}$.
			\end{enumerate}
			
			By H43, Case C2.4, and \cref{lemma:substitution},
			\begin{enumerate}[align=left, leftmargin=1\enumindent]
				\item[H84)] $\vdash_i s';~v^*;~L^k[\op[\kappa']{i64.const}] : \tilde{t}^*$
			\end{enumerate}
			
			By H3, there exists a type for the stack that is being copied:
			\begin{enumerate}[align=left, leftmargin=1\enumindent]
				\item[H85)] $\exists t'^*,\; S_\text{inst}(i)_\text{pstack}(0)_\text{ctable}(\kappa) = t'^*$
			\end{enumerate}
			
			By Case C2, H82, H83, H85, and the store typing relation:
			\begin{enumerate}[align=left, leftmargin=1\enumindent]
				\item[H86)] $\vdash s' : S'$
				\item[H87)] $p_r = \max \{ p \mid S'_\text{inst}(i)_\text{pstack}(p)_\text{root} \neq \mathit{nil} \}$
				\item[H88)] $S'_\text{inst}(i)_\text{pstack}(p_r)_\text{root} = \kappa_R$
				\item[H89)] $S'_\text{inst}(i)_\text{pstack}(p_r)_\text{ctable}(\kappa_R) = t^*$
			\end{enumerate}
			
			By H84, H86, H87, H88, H89 and the [Non-Root] rule:		
			\begin{enumerate}[align=left, leftmargin=1\enumindent]
				\item[H90)] $\vdash_i^k s';v^*; L^k[\op[\kappa']{i64.const}] : t^*$
			\end{enumerate}
			
			Case C2.4 is complete.

		\item $s;v^*; L^k[\op[\kappa]{i64.const} \continuationcopy] \hookrightarrow_i s;v^*; L^k[\op{trap}] \land \copytrans(s, i, \kappa)~\text{undefined}$
					
			By the type checking of $\op{trap}$, H43, and \cref{lemma:substitution},
			\begin{enumerate}[align=left, leftmargin=1\enumindent]
				\item[H91)] $\vdash_i s;\; v^* ;\; L^k[\op{trap}] : \tilde{t}^*$
			\end{enumerate}
			
			By Case C2, H3, H42, H44, H91, and the [Non-Root] rule, we conclude:
			\begin{enumerate}[align=left, leftmargin=1\enumindent]
				\item[H92)] $\vdash_i^k s;\; v^* ;\; L^k[\op{trap}] : t^*$
			\end{enumerate}
			Case C2.5 is complete.
		
		\item $s;v^*; L^k[\op[\kappa]{i64.const} \continuationdelete] \hookrightarrow_i s';v^*; L^k[\epsilon] \land s' = \deletetrans(s, i, \kappa)$
						
			By Case C2.6:
			\begin{enumerate}[align=left, leftmargin=1\enumindent]
				\item[H93)] $\deletetrans(s, i, \kappa) = s' = \fsetCont(s, i, \kappa, \mathit{nil})$
			\end{enumerate}
			
			By H43, Case C2.6, and \cref{lemma:substitution},
			\begin{enumerate}[align=left, leftmargin=1\enumindent]
				\item[H94)] $\vdash_i s';~v^*;~L^k[\epsilon] : \tilde{t}^*$
			\end{enumerate}
			
			By Case C2, H93, and the store typing relation:
			\begin{enumerate}[align=left, leftmargin=1\enumindent]
				\item[H95)] $\vdash s' : S'$
				\item[H96)] $p_r = \max \{ p \mid S'_\text{inst}(i)_\text{pstack}(p)_\text{root} \neq \mathit{nil} \}$
				\item[H97)] $S'_\text{inst}(i)_\text{pstack}(p_r)_\text{root} = \kappa_R$
				\item[H98)] $S'_\text{inst}(i)_\text{pstack}(p_r)_\text{ctable}(\kappa_R) = t^*$
			\end{enumerate}
			
			By H94, H95, H96, H97, H98 and the [Non-Root] rule:		
			\begin{enumerate}[align=left, leftmargin=1\enumindent]
				\item[H99)] $\vdash_i^k s';v^*; L^k[\op[\kappa']{i64.const}] : t^*$
			\end{enumerate}
			
			Case C2.6 is complete.
		
		\item $s;v^*; L^k[\op[\kappa]{i64.const} \continuationdelete] \hookrightarrow_i s;v^*; L^k[\op{trap}] \land \deletetrans(s, i, \kappa)~\text{undefined}$

			By the type checking of $\op{trap}$, H43, and \cref{lemma:substitution},
			\begin{enumerate}[align=left, leftmargin=1\enumindent]
				\item[H100)] $\vdash_i s;\; v^* ;\; L^k[\op{trap}] : \tilde{t}^*$
			\end{enumerate}
			
			By Case C2, H3, H42, H44, H100, and the [Non-Root] rule, we conclude:
			\begin{enumerate}[align=left, leftmargin=1\enumindent]
				\item[H101)] $\vdash_i^k s;\; v^* ;\; L^k[\op{trap}] : t^*$
			\end{enumerate}
			
			Case C2.7 is complete.

		\item $s;v^*; L^k[\prompt{\textit{tf}}{e^*}] \hookrightarrow_i s';v^*; L^k[\op{block} \; \textit{tf} \; e^* \; \op{end} \; \op{prompt\_end}] \land s' = \prompttrans(s, i)$
					
			By Case C2.8:
			\begin{enumerate}[align=left, leftmargin=1\enumindent]
				\item[H102)] $\prompttrans(s, i) = s'~\text{where}~s' = s~\text{except}~s'_\text{inst}(i)_\text{pstack} \mapsto \text{push}(s_\text{inst}(i)_\text{pstack}, \{ \text{ctable} = \mathit{nil}^*, \text{root} = \mathit{nil} \})$
			\end{enumerate}
			
			By H43, H102, the type checking of $\op{prompt}$, and \cref{lemma:substitution},
			\begin{enumerate}[align=left, leftmargin=1\enumindent]
				\item[H103)] $\vdash_i s';\; v^* ;\; L^k[\op{block} \; \textit{tf} \; e^* \; \op{end} \; \op{prompt\_end}] : \tilde{t}^*$
			\end{enumerate}
			
			By H3, H102 and store typing:
			 \begin{enumerate}[align=left, leftmargin=1\enumindent]
				\item[H104)] $\vdash s' : S'$
				\item[H105)] $p_r = \max \{ p \mid S'_\text{inst}(i)_\text{pstack}(p)_\text{root} \neq \mathit{nil} \}$
				\item[H106)] $S'_\text{inst}(i)_\text{pstack}(p_r)_\text{root} = \kappa_R$
				\item[H107)] $S'_\text{inst}(i)_\text{pstack}(p_r)_\text{ctable}(\kappa_R) = t^*$
			\end{enumerate}
			
			By H103, H104, H105, H106, H107, and the [Non-Root] rule:
			\begin{enumerate}[align=left, leftmargin=1\enumindent]
				\item[H108)] $\vdash_i^k s';\; v^* ;\; L^k[\op{block} \; \textit{tf} \; e^* \; \op{end} \; \op{prompt\_end}] : t^*$
			\end{enumerate}
			
			Case C2.8 is complete.
			
		\item $s;v^*; L^k[\promptend] \hookrightarrow_i s';v^*;\; L^k[\epsilon] \land  s' = \promptendtrans(s, i)$

			The administrative store typing of $\promptend$ implies:
			\begin{enumerate}[align=left, leftmargin=1\enumindent]
				\item[H109)] $S_\text{inst}(i)_\text{pstack}(0) = \mathit{nil}$
			\end{enumerate}
			
			By Case C2.9:
			\begin{enumerate}[align=left, leftmargin=1\enumindent]
				\item[H110)] $\promptendtrans(s, i) = s'~\text{where}~s' = s~\text{except}~s'_\text{inst}(i)_\text{pstack} \mapsto \text{pop}(s_\text{inst}(i)_\text{pstack})$
			\end{enumerate}
			
			By H43, H110, the type checking of $\promptend$, and \cref{lemma:substitution},
			\begin{enumerate}[align=left, leftmargin=1\enumindent]
				\item[H111)] $\vdash_i s';\; v^* ;\; L^k[\epsilon] : \tilde{t}^*$
			\end{enumerate}
			
			By H3, H109, H110 and store typing:
			 \begin{enumerate}[align=left, leftmargin=1\enumindent]
				\item[H112)] $\vdash s' : S'$
				\item[H113)] $p_r - 1 = \max \{ p \mid S'_\text{inst}(i)_\text{pstack}(p)_\text{root} \neq \mathit{nil} \}$
				\item[H114)] $S'_\text{inst}(i)_\text{pstack}(p_r - 1)_\text{root} = \kappa_R$
				\item[H115)] $S'_\text{inst}(i)_\text{pstack}(p_r - 1)_\text{ctable}(\kappa_R) = t^*$
			\end{enumerate}
			
			By H111, H112, H113, H114, H115, and the [Non-Root] rule:
			\begin{enumerate}[align=left, leftmargin=1\enumindent]
				\item[H116)] $\vdash_i^k s';\; v^* ;\; L^k[\epsilon] : t^*$
			\end{enumerate}
			
			Case C2.9 is complete.
			
		\item $s;v^*; e^* \hookrightarrow_i s';v'^*;\; e'^*$ for some redex in standard WebAssembly.
					
			Since standard WebAssembly redexes do not modify the continuation tables,
			\begin{enumerate}[align=left, leftmargin=1\enumindent]
				\item[H117)] $\vdash s' : S'$
				\item[H118)] $S'_\text{inst}(i)_\text{pstack} = S_\text{inst}(i)_\text{pstack}$
			\end{enumerate}

			Therefore, by H43 and the Preservation theorem of standard WebAssembly:
			\begin{enumerate}[align=left, leftmargin=1\enumindent]
				\item[H119)] $\vdash_i s';\; v'^* ;\; e'^* : \tilde{t}^*$
			\end{enumerate}
			
			By H117, H118, H119, and the [Non-Root] rule:
			\begin{enumerate}[align=left, leftmargin=1\enumindent]
				\item[H120)] $\vdash_i^k s';\; v'^* ;\; e'^* : t^*$
			\end{enumerate}
			Case C2.10 is complete.
	\end{enumerate}
\end{enumerate}
\end{proof}

}

\begin{theorem}[$\rightsquigarrow$ Progress]
	If $\vdash_i^k s; v^*; e^* : t^*$, then either $e^* = v'^*$ or $e^* = \textbf{trap}$ or $s; v^*; e^* \rightsquigarrow_i s'; v'^*; e'^*$.
\end{theorem}

\begin{proof}
	The reduction rules for 
	$\op{control}$, $\restore$, $\continuationcopy$, 
	$\continuationdelete$, and $\op{prompt}$ can be 
	trivially checked to cover the space of possible 
	well-typed configurations $s; v^*; e^*$. 
	Thus, all redexes for these instructions are guaranteed to 
	take a step with $\rightsquigarrow$ to a new configuration (possibly a step to a $\op{trap}$).
	
	The non-trivial case is $\promptend$. 
	Suppose $\vdash_i^k s; v^*; L^k[\promptend] : t^*$. 
	From the reduction rule of $\promptend$, 
	a step $s; v^*; L^k[\promptend] \rightsquigarrow s; v^*; L^k[\epsilon]$ 
	will occur if $\fgetRoot(s, i) = \mathit{nil}$. 
	
	However, no such step occurs if $\fgetRoot(s, i) \neq \mathit{nil}$. 
	We thus want to show that $\fgetRoot(s, i) \neq \mathit{nil}$ 
	is in contradiction with $\vdash_i^k s; v^*; L^k[\promptend] : t^*$, 
	implying that this case cannot occur. By the [Prompt-End] rule and 
	$\vdash_i^k s; v^*; L^k[\promptend] : t^*$ we can deduce that $\vdash s : S$
	and $S_\text{inst}(i)_\text{pstack}(0)_\text{root} = \mathit{nil}$.
	From this and the store typing rules, 
	we find that $\fgetRoot(s, i) = \mathit{nil}$, which is a contradiction.
\end{proof}

\fi

\end{document}